\newcommand{\comment}[1]{}
\newcommand{\URL}[1]{\textcolor{blue}{\url{#1}}}
\newtheorem{theorem}{Theorem}
\newtheorem{corollary}[theorem]{Corollary}
\newtheorem{lemma}[theorem]{Lemma}
\theoremstyle{definition}
\newtheorem{definition}{Definition}
\newclass{\coUL}{coUL}
\newclass{\modkl}{\mbox{Mod$_k$L}}
\newclass{\FL}{FL}
\newclass{\ceql}{\mbox{C$_=$L}}
\newlang{\GGR}{GGR}
\newlang{\MGGR}{3D-mGGR}
\newlang{\gB}{Bi}
\newlang{\kGonBi}{k-gon-bi}
\newlang{\GG}{k-ori-GG}
\newlang{\gReach}{Reach}
\numberwithin{equation}{section}
\newlang{\isMatch}{Decision-BPM}
\newlang{\constructMatch}{Search-BPM}
\newlang{\isUniqueMatch}{Unique-BPM}
\newclass{\FSPL}{FSPL}
\newclass{\CeqL}{\mbox{C$_=$L}}
\newclass{\modk}{\mbox{Mod$_k$L}}
\newcommand{\cut}{\mbox{\Rightscissors}}
\title{\bf{\Large{Space Complexity of Perfect Matching in Bounded Genus Bipartite Graphs}}}
\author{ {\sc Samir Datta} \footnote{Chennai Mathematical Institute, India:
    {\tt email:sdatta@cmi.ac.in}} \and  {\sc Raghav Kulkarni} \footnote{University of
    Chicago: {\tt email:raghav@cs.uchicago.edu}} \and  {\sc Raghunath
  Tewari}\footnote{University of Nebraska-Lincoln: {\tt email:rtewari@cse.unl.edu}. Research supported in part by NSF grants CCF-0830730 and CCF-0916525.} \and
{\sc N. V. Vinodchandran}\footnote{University
of Nebraska-Lincoln: {\tt email:vinod@cse.unl.edu}. Research supported in part by NSF grants CCF-0830730 and CCF-0916525.}}
\date{\today}
\begin{document}
\thispagestyle{empty}

\maketitle 

\begin{abstract}
We investigate the space complexity of certain perfect matching
problems over bipartite graphs embedded on surfaces of constant genus
(orientable or non-orientable). We show that the problems of deciding
whether such graphs have (1) a perfect matching or not and (2) a
unique perfect matching or not, are in the logspace complexity class
\SPL.  Since \SPL\ is contained in the logspace counting classes
$\oplus\L$ (in fact in \modk\ for all $k\geq 2$), \CeqL, and \PL, our
upper bound places the above-mentioned matching problems in these
counting classes as well. We also show that the search version,
computing a perfect matching, for this class of graphs is in
$\FL^{\SPL}$. Our results extend the same upper bounds for these
problems over bipartite planar graphs known earlier.

As our main technical result, we design a logspace computable and
polynomially bounded weight function which isolates a minimum weight
perfect matching in bipartite graphs embedded on surfaces of constant
genus.  We use results from algebraic topology for proving the
correctness of the weight function.
\end{abstract}

\newpage

\section{Introduction}
The {\em perfect matching} problem and its variations are one of the
most well-studied problems in theoretical computer science. Research
in understanding the inherent complexity of computational problems
related to matching has lead to important results and techniques in
complexity theory and elsewhere in theoretical computer
science. However, even after decades of research, the exact complexity
of many problems related to matching is not yet completely understood.

We investigate the {\em space complexity} of certain well studied
perfect matching problems over bipartite graphs.  We prove new uniform
space complexity upper bounds on these problems for {\em graphs
  embedded on surfaces of constant genus}. We prove our upper bounds
by solving the technical problem of `deterministically isolating' a
perfect matching for this class of graphs.

Distinguishing a single solution out of a set of solutions is a basic
algorithmic problem with many applications.  The {\em isolating lemma}
due to Mulmulay, Vazirani, and Vazirani provides a general randomized
solution to this problem.  Let ${\cal F}$ be a non-empty set system on
$U=\{1,\ldots, n\}$. The isolating lemma says, for a random weigh
function on $U$ (bounded by $n^{O(1)}$), with high probability there
is a {\em unique} set in ${\cal F}$ of minimum weight
\cite{MulmuleyEtAl87}. This lemma was originally used to give an
elegant \RNC\ algorithm for constructing a maximum matching (by
isolating a minimum weight perfect matching) in general graphs. Since
its discovery, the isolating lemma has found many applications, mostly
in discovering new randomized or non-uniform upper bounds, via
isolating minimum weight solutions \cite{MulmuleyEtAl87,
  ReinhardtAllender00,GalWigderson96,AllenderMatching99}.  Clearly,
derandomizing the isolating lemma in sufficient generality will
improve these upper bounds to their deterministic counterparts and
hence will be a major result. Unfortunately, recently it is shown that
such a derandomization will imply certain circuit lower bounds and
hence is a difficult task \cite{AM08}.

Can we bypass isolating lemma altogether and deterministically isolate
minimum weight solutions in specific situations?  Recent results
illustrate that one may be able to use the structure of specific
computational problem under consideration to achieve non-trivial
deterministic isolation.  In \cite{BTV09}, the authors used the
structure of directed paths in planar graphs to prescribe a simple
weight function that is computable deterministically in logarithmic
space with respect to which the minimum weight directed path between
any two vertices is unique. In \cite{DKR08}, the authors isolated a
perfect matching in planar bipartite graphs.  In this paper we extend
the deterministic isolation technique of \cite{DKR08} to isolate a
minimum weight perfect matching in bipartite graphs embedded on
constant genus surfaces. 

\subsection*{Our Contribution}
Let $G$ be a bipartite graph with weight function $w$ on it edges. For
an even cycle $C = e_1e_2\cdots e_{2k}$, the circulation of $C$ with
respect to $w$ is the sum $\sum_{i=1}^{2k}(-1)^{i}w(e_i)$. The main
technical contribution of the present paper can be stated
(semi-formally) as follows.

\vspace{2mm}
\noindent{\bf\em Main Technical Result}. There is a logspace matching
preserving reduction $f$, and a logspace computable and polynomially
bounded weight function $w$, so that given a bipartite graph $G$ with
a combinatorial embedding on a surface of constant genus, the
circulation of any simple cycle in $f(G)$ with respect to $w$ is
non-zero. (This implies that the minimum weight perfect matching in
$f(G)$ is unique \cite{DKR08}).
\vspace{1mm}

\comment{
We use this result to show (using known techniques) that 
{\isMatch}
for graphs embedded on a surface of constant genus is in the
complexity class {\SPL}. We also show that the problem of computing a
perfect matching if one exists (\constructMatch) is in $\FL^{\SPL}$. 
}

We use this result to establish (using known techniques) the following
new upper bounds. Refer to the next section for definitions.

\vspace{2mm}
\noindent{\bf\em New Upper Bounds.} For bipartite graphs, combinatorially embedded on
surfaces of constant genus the problems {\isMatch} and
{\isUniqueMatch} are in {\SPL}, and the problem {\constructMatch} is
in $\FL^{\SPL}$.
\vspace{1mm}

\SPL\ is a logspace complexity class that was first studied by
Allender, Reinhardt, and Zhou \cite{AllenderMatching99}. This is the
class of problems reducible to the determinant with the promise that
the determinant is either 0 or 1.  In \cite{AllenderMatching99}, the
authors show, using a non-uniform version of isolating lemma, that
perfect matching problem for general graphs is in a `non-uniform'
version of \SPL. In \cite{DKR08}, using the above-mentioned
deterministic isolation, the authors show that for planar bipartite
graphs, {\isMatch} is in fact in \SPL\ (uniformly). Recently, Hoang
showed that for graphs with polynomially many matchings, perfect
matchings and many related matching problems are in
\SPL~\cite{Hoang09}. \SPL\ is contained in logspace counting classes
such as $\modkl$ for all $k\geq 2$ (in particular in $\oplus\L$), \PL,
and $\ceql$, which are in turn contained in $\NC^2$. Thus the upper
bound of \SPL\ that we prove implies that the problems \isMatch\ and
\isUniqueMatch\ for the class of graphs we study are in these logspace
counting classes as well.

The techniques that we use in this paper can also be used to isolate
directed paths in graphs on constant genus surfaces. This shows that
the reachability problem for this class of graphs can be decided in
the unambiguous class $\UL$, extending the results of
\cite{BTV09}. But this upper bound is already known since recently
Kyn\v{c}l and Vysko\v{c}il show that reachability for bounded genus
graphs logspace reduces to reachability in planar graphs \cite{KV09}.

Matching problems over graphs of low genus have been of interest to
researchers, mainly from a parallel complexity viewpoint. The matching
problems that we consider in this paper are known to be in \NC. In
particular in \cite{KMV08}, the authors present an $\NC^2$ algorithm
for computing a perfect matching for bipartite graphs on surfaces of
$O(\log n)$ genus (readers can also find an account of known parallel
complexity upper bounds for matching problems over various classes of
graphs in their paper). However, the space complexity of matching
problems for graphs of low genus has not been investigated before. The
present paper takes a step in this direction.

\vspace{2mm}
\noindent {\bf\em Proof Outline}.  We assume that the graph $G$ is
presented as a combinatorial embedding on a surface (orientable or
non-orientable) of genus $g$, where $g$ is a constant. This is a
standard assumption when dealing with graphs on surfaces, since it is
NP-complete to check whether a graph has genus $ \leq g$
\cite{Thomassen89}. We first give a sequence of two reductions
to get, from $G$, a graph $G'$ with an embedding on a genus $g$
`polygonal schema in normal form'. These two reductions work for both
orientable and non-orientable cases. At this point we take care of the non-orientable case by reducing it to the orientable case. Once we have the embedding on an orientable polygonal schema in normal form, we
further reduce $G'$ to $G''$ where $G''$ is embedded on
a constant genus `grid graph'.  These reductions
are matching preserving, bipartiteness preserving and computable in logspace.  Finally,
for $G''$, we prescribe a set of $4g+1$ weight functions, $\mathcal{W}
= \{w_i\}_{1\leq i\leq 4g+1}$, so that for any cycle $C$ in $G''$,
there is a weight function $w_i \in \mathcal{W}$ with respect to which
the circulation of $C$ is non-zero. Since $g$ is constant, we can take
a linear combination of the elements in $\mathcal{W}$, for example
$\sum_{w_i \in \mathcal{W}}{w_i \times \left(n^c\right)^i}$ (where $n$ is the number of vertices in the grid) for some fixed
constant $c$ (say $c=4$), to get a single weight function with respect which the
circulation of any cycle is non-zero. 

The intuition behind these weight functions is as follows (for some of
the definitions, refer to later sections). The set ${\mathcal W}$ is a
disjoint union $\mathcal{W}_1 \cup \mathcal{W}_2 \cup \{w\}$ of the sets of
weight functions $\mathcal{W}_1$, $\mathcal{W}_2$, and
$\{w\}$. Consider a graph $G$ embedded on a fundamental polygon with
$2g$ sides. There are two types cycles in $G$: {\em surface separating} and {\em surface non-separating}. A basic theorem from
algebraic topology implies that a surface non-separating cycle will
intersect at least one of the sides of the polygon an odd number of
times. This leads to $2g$ weight functions in $\mathcal{W}_1$ to take
care of all the surface non-separating cycles. There are two types of
surface separating cycles: (a) ones which completely lie inside the
polygon and (b) the ones which cross some boundary. Type (a) cycles
behaves exactly like cycles in plane so the weight function $w$
designed for planar graphs works (from \cite{DKR08}). For dealing with
cycles of type (b), we first prove that if such a cycle intersects a
boundary, it should alternate between `coming in' and `going
out'. This leads to $2g$ weight functions in $\mathcal{W}_2$ which
handle all type (b) cycles.

Figure~\ref{fig:plan} gives a pictorial view of the components
involved in the proof of our main technical result.

The rest of the paper is organized as follows. In Section $2$ we give the necessary definitions and state results from earlier work, that we use in this paper. In Section $3$ we state and prove our upper bounds assuming a grid embedding. In Section $4$ we reduce the non-orientable case to the orientable one. In Section $5$ we give matching preserving, logspace reductions from a combinatorial embedding of the graph on a surface of genus $g$, to a grid embedding. In Section $6$ we add proofs of some necessarylemmas and theorems that we use to prove our results.


\begin{figure}[h!]
\centering

\begin{tikzpicture}[scale=.6,shorten >=.35mm,>=latex]

\node  (p1) at (0,0) [shape=rectangle,draw,text width = 3cm] {Combinatorial embedding of a graph on a genus $g$ orientable surface};
\node  (p2) at (9.5,0) [shape=rectangle,draw,text width = 3cm] {Combinatorial embedding on an orientable polygonal schema with $O(g)$ sides};
\node  (p3) at (19,0) [shape=rectangle,draw,text width = 3cm] {Combinatorial embedding on an orientable polygonal schema in normal form};
\node  (p4) at (19,-5) [shape=rectangle,draw,text width = 3cm] {Embedding on a ``genus $g$ grid graph''};
\node  (p5) at (9.5,-5) [shape=rectangle,draw,text width = 3cm] {Assignment of weight function $W$, w.r.t which circulations become non-zero};
\node  (p6) at (0,-5) [shape=rectangle,draw,text width = 3cm] {Minimum weight perfect matching w.r.t. $W$ is unique};

\node  (p1') at (0,6) [shape=rectangle,draw,text width = 3cm] {Combinatorial embedding of a graph on a genus $g$ non-orientable surface};
\node  (p2') at (9.5,6) [shape=rectangle,draw,text width = 3cm] {Combinatorial embedding on a non-orientable polygonal schema with $O(g)$ sides};
\node  (p3') at (19,6) [shape=rectangle,draw,text width = 3cm] {Combinatorial embedding on a non-orientable polygonal schema in normal form};

\draw [->] (p1)-- (p2) node[midway,above] {Lemma \ref{lemma:polySchema}};
\draw [->] (p2)--(p3) node[midway,above] {Theorem \ref{theorem:normal}};
\draw [->] (p3)--(p4) node[midway,right] {Lemma \ref{lemma:grid_reduction}};
\draw [->] (p4)--(p5)node[midway,above] {Theorem \ref{theorem:main}}node[midway,below] {{\scriptsize (Main Theorem)}};
\draw [->] (p1')--(p2')node[midway,above] {Lemma \ref{lemma:polySchema}};
\draw [->] (p2')--(p3')node[midway,above] {Theorem \ref{theorem:normal}};
\draw [->] (p3')--(p2)node[midway,above,sloped] {Theorem \ref{theorem:nonorientable}};
\draw [->] (p5)--(p6)node[midway,above] {Lemma \ref{lemma:uniquepm}};

\node at (0,2.4) {\bf \em Orientable case};
\node at (0,8.4) {\bf \em Non-orientable case};
\end{tikzpicture}
\label{fig:plan}
\caption{Outline of the steps. Note that all reductions are matching preserving and logspace computable.}
\end{figure}
\section{Preliminaries}

\subsection{Topological graph theory}

We introduce the necessary terminology from algebraic topology. For a
more comprehensive understanding of this topic, refer to any standard
algebraic topology book such as \cite{Massey91}.

A {\em 2-manifold} is a topological space such that every point has an
open neighborhood homeomorphic to $\mathbb{R}^2$ and two distinct
points have disjoint neighborhoods. A 2-manifold is often called a
{\em surface}. The {\em genus} of a surface $\Gamma$ is the maximum
number $g$, if there are $g$ cycles $C_1, C_2, \ldots ,C_g$ on
$\Gamma$, such that $C_i \cap C_j = \emptyset$ for all $i,j$ and
$\Gamma \setminus ( C_1 \cup C_2 \cup \ldots \cup C_g)$ is
connected. A surface is called {\em orientable} if it has two distinct
sides, else it is called {\em non-orientable}. A cycle $C$ in $\Gamma$
is said to be {\em non-separating} if there exists a path between any
two points in $\Gamma \setminus C$, else it is called {\em
  separating}.


A {\em polygonal schema} of a surface $\Gamma$, is a polygon with
$2g'$ directed sides, such that the sides of the polygon are
partitioned into $g'$ classes, each class containing exactly two sides
and glueing the two sides of each equivalence class gives the surface
$\Gamma$ (upto homeomorphism). A side in the $i$th equivalence class
is labelled $\sigma_i$ or $\bar{\sigma_i}$ depending on whether it is
directed clockwise or anti-clockwise respectively. The {\em partner} of a
side $\sigma$ is the other side in its equivalence class. By an abuse
of notation, we shall sometimes refer to the symbol of a side's partner,
as the partner of the symbol. Frequently we will denote a polygonal
schema as a linear ordering of its sides moving in a clockwise
direction, denoted by $X$. For a polygonal schema $X$, we shall refer
to any polygonal schema which is a cyclic permutation, or a reversal
of the symbols, or a complementation ($\sigma$ mapped to
$\bar{\sigma}$ and vice versa) of the symbols, as being the same as
$X$. A polygonal schema is called orientable (resp. non-orientable) if
the corresponding surface is orientable (resp. non-orientable).

\begin{definition} 
An orientable polygonal schema is said to be in {\em normal form} if it 
is in one of the following forms:
\begin{equation} \label{equation:orientSchema}
\sigma_1 \tau_1 \bar{\sigma_1}\bar{\tau_1} \sigma_2 \tau_2 \bar{\sigma_2}\bar{\tau_2} \ldots \sigma_m \tau_m \bar{\sigma_m}\bar{\tau_m} 
\end{equation}
\begin{equation}\label{equation:orientSchema2}
 \sigma \bar{\sigma}
\end{equation}

A non-orientable polygonal schema is said to be in normal form if it is of one of the 
following forms:
\begin{eqnarray}
& & \sigma \sigma X \\
& & \sigma \tau \bar{\sigma} \tau X
\end{eqnarray}
where, $X$ is a string representing an orientable schema 
in normal form (i.e. like Form~\ref{equation:orientSchema} or \ref{equation:orientSchema2} above). 
\end{definition}
We denote the polygonal schema in the normal form of a surface $\Gamma$ as $\Lambda(\Gamma)$. 
We will refer to two orientable symbols $\sigma, \tau$ which form the following 
contiguous substring: $\sigma \tau \bar{\sigma}\bar{\tau}$ as being clustered
together while a non-orientable symbol $\sigma$ which occurs like 
$\sigma \sigma$ as a contiguous subtring is said to form a {\em pair}. Thus,
in the first and third normal forms above all symbols are clustered. The first
normal form represents a connected sum of torii and the third of a projective
plane and torii. In the fourth
normal form all but one of the orientable symbols are clustered while the
only non-orientable symbol is sort of clustered with the other orientable
symbol. This form represents a connected sum of a Klein Bottle and torii.
The second normal form represents a sphere.

We next introduce the concept of $\mathbb{Z}_2$-homology. Given a
2-manifold $\Gamma$, a {\em 1-cycle} is a closed curve in
$\Gamma$. The set of 1-cycles forms an Abelian group, denoted as
$\mathcal{C}_1(\Gamma)$, under the {\em symmetric difference} operation, $\Delta$. Two 1-cycles $C_1, C_2$ are said to be
homologically equivalent if $C_1\Delta C_2$ forms the boundary of some
region in $\Gamma$. Observe that this is an equivalence relation. Then
the {\em first homology group} of $\Gamma$, $H_1(\Gamma)$, is the set
of equivalence classes of 1-cycles. In other words, if
$\mathcal{B}_1(\Gamma)$ is defined to be the subset of
$\mathcal{C}_1(\Gamma)$ that are homologically equivalent to the empty
set, then $H_1(\Gamma) =
\mathcal{C}_1(\Gamma)/\mathcal{B}_1(\Gamma)$. If $\Gamma$ is a genus
$g$ surface then $H_1(\Gamma)$ is generated by a system of $2g$
1-cycles, having only one point in common, and whose complement is
homeomorphic to a topological disk. Such a disk is also referred to as
the {\em fundamental polygon} of $\Gamma$. 

An undirected graph $G$ is said to be embedded on a surface $\Gamma$
if it can be drawn on $\Gamma$ so that no two edges cross. We assume that
the graph is given with a {\em combinatorial embedding} on a surface
of constant genus. Refer to the book by Mohar and Thomassen
\cite{MoharThomassen01} for details.  A graph $G$ is said to have {\em
  genus} $g$ if $G$ has a minimal embedding (an embedding where every
face of $G$ is homeomorphic to a disc) on a genus $g$ surface. Such an
embedding is also called a {\em 2-cell embedding}. A genus $g$ graph
is said to be orientable (non-orientable) if the surface is orientable
(non-orientable). 
\begin{definition}
The {\em polygonal schema of a graph} $G$ is a combinatorial embedding given on the polygonal schema of some surface $\Gamma$ together with the ordered set of vertices on each side of the polygon. Formally it is a tuple $(\phi , \mathcal{S})$, where $\phi$ is a cyclic ordering of the edges around a vertex and $\mathcal{S} = (S_1, S_2, \ldots , S_{2g})$ is the cyclic ordering of the directed sides of the polygon. Each $S_i$ is an ordered sequence of the vertices, from the tail to the head of the side $S_i$. Moreover every $S_i$ is paired with some other side, say $S_i^{-1}$ in $\mathcal{S}$, such that the $j$th vertex of $S_i$ (say from the tail of $S_i$) is the same as the $j$th vertex of $S_i^{-1}$ (form the tail of $S_i^{-1}$).
\end{definition}

\subsection{Complexity Theory}

For a nondeterministic machine $M$, let ${\it acc}_{M}(x)$ and ${\it
  rej}_{M}(x)$ denote the number of accepting computations and the
number of rejecting computations respectively. Denote ${\it
  gap}_{M}(x) = {\it acc}_{M}(x)-{\it rej}_{M}(x)$.

\begin{definition}
A language $L$ is in $\SPL$ if there exists a logspace bounded
nondeterministic machine $M$ so that for all inputs $x$, ${\it gap}_{M}(x)\in \{0 ,1\}$ and $x\in L$ if and only
if ${\it gap}_{M}(x) = 1$. $\FL^{\SPL}$ is the class
of functions computed by a logspace machine with an {\SPL}
oracle. {\UL} is the class of languages $L$,
decided by a nondeterministic logspace machine (say $M$), such that
for every string in $L$, $M$ has exactly one accepting path and for a
string not in $L$, $M$ has no accepting path.
\end{definition}

Alternatively, we can define {\SPL} as the class of problems logspace
reducible to the problem of checking whether the determinant of a matrix is
$0$ or not under the promise that the determinant is either $0$ or
$1$.  For definitions of other complexity classes refer to any
standard textbooks such as \cite{AroraBarak09,Vollmer99}.  All reductions
discussed in this paper are logspace reductions.


Given an undirected graph $G=(V,E)$, a {\em matching} $M$ is a subset
of $E$ such that no two edges in $M$ have a vertex in common. A {\em
  maximum matching} is a matching of maximum cardinality. $M$ is said
to be a perfect matching if every vertex is an endpoint of some edge
in $M$. 

\begin{definition}
We define the following computational problems related to matching:
\begin{itemize}

\vspace{-2mm}
\item[-]
{\isMatch} : Given a bipartite graph $G$, checking if $G$ has a perfect matching.

\vspace{-2mm}
\item[-]
{\constructMatch}: Given a bipartite graph $G$, constructing a perfect matching, if one exists.

\vspace{-2mm}
\item[-]
{\isUniqueMatch}: Given a bipartite graph $G$, checking if $G$ has a unique perfect matching.
\end{itemize}
\end{definition}

\subsection{Necessary Prior Results}

\begin{lemma}[\cite{DKR08}] 
\label{lemma:uniquepm}For any bipartite graph $G$ and a weight function $w$, if all circulations of $G$ are non-zero, then $G$ has a unique minimum weight perfect matching.
\end{lemma}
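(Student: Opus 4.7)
The plan is to proceed by contradiction: suppose $G$ admits two distinct perfect matchings $M_1 \neq M_2$, both of minimum weight. I first analyse the symmetric difference $M_1 \Delta M_2$. Since $M_1$ and $M_2$ are perfect matchings, every vertex has degree $0$ or $2$ in $M_1 \Delta M_2$, so this edge set decomposes into a vertex-disjoint union of simple alternating cycles (edges alternating between $M_1$ and $M_2$); because $G$ is bipartite, each such cycle has even length.

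Next, for any alternating cycle $C = e_1 e_2 \cdots e_{2k}$ in $M_1 \Delta M_2$, I consider the swap $M_1' := M_1 \Delta E(C)$, which is again a perfect matching. The weight change $w(M_1') - w(M_1)$ is obtained by removing the $M_1$-edges of $C$ and adding the $M_2$-edges of $C$; since these two edge sets interleave along the cycle, a direct tally gives $w(M_1') - w(M_1) = \pm\,\mathrm{circ}(C)$, where the sign depends only on which parity class of the indexing $1,\ldots,2k$ is occupied by $M_1$. Symmetrically, $M_2' := M_2 \Delta E(C)$ is a perfect matching with $w(M_2') - w(M_2) = \mp\,\mathrm{circ}(C)$.

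Minimality of $M_1$ and $M_2$ then forces both $\pm\,\mathrm{circ}(C) \geq 0$ and $\mp\,\mathrm{circ}(C) \geq 0$, yielding $\mathrm{circ}(C) = 0$. Since $C$ is a simple even cycle in $G$, this contradicts the hypothesis that all circulations of $G$ are non-zero. The only conceptual step — and the one where the definition of circulation earns its keep — is recognising that the alternating signs $(-1)^i$ in the circulation precisely encode the membership pattern of the two matchings along an alternating cycle, so that the circulation is a faithful proxy for the weight change under a matching swap. Beyond this observation the argument is routine, and no embedding or topological input is needed: the lemma is a purely combinatorial fact about weighted bipartite graphs, which is exactly why the main technical work of the paper lies in \emph{constructing} a weight function satisfying the hypothesis rather than in exploiting it.
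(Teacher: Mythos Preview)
Your proof is correct and is precisely the standard argument behind this lemma. Note, however, that the paper does not actually supply a proof of Lemma~\ref{lemma:uniquepm}: it is quoted from~\cite{DKR08} as a prior result and used as a black box, so there is no in-paper proof to compare against. Your contradiction via the symmetric difference of two minimum-weight perfect matchings, picking a single alternating cycle $C$ and observing that swapping along $C$ from either side changes the weight by $\pm\,\mathrm{circ}_w(C)$, is exactly the argument one finds in~\cite{DKR08}; the bipartiteness hypothesis is used only to ensure every cycle is even so that the circulation is defined.
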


\begin{lemma} [\cite{AllenderMatching99}] 
\label{lemma:spl}For any weighted graph $G$ assume that the minimum weight perfect matching in $G$ is unique and also for any subset of edges $E' \subseteq E$, the minimum weight perfect matching in $G\setminus E'$ is also unique. Then deciding if $G$ has a perfect matching is in {\SPL}. Moreover, computing the perfect matching (in case it exists) is in $\FL^{\SPL}$.
\end{lemma}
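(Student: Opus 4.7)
The plan is to reduce \isMatch\ to a determinant computation using a weighted adjacency matrix, exploit the uniqueness hypothesis to collapse the determinant to an \SPL\ promise problem, and then bootstrap \constructMatch\ to $\FL^{\SPL}$ via a standard search-to-decision reduction that relies on the persistence of uniqueness under edge deletions.

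Let $(L,R)$ be the bipartition of $G$ with $|L|=|R|=n$, let $W$ bound the weights polynomially, and form the $n\times n$ matrix $B$ with $B_{ij}=2^{w(e_{ij})}$ if $(\ell_i,r_j)\in E$ and $0$ otherwise. By the Leibniz formula,
\[
\det(B)=\sum_{M\text{ perfect matching}}\mathrm{sgn}(M)\, 2^{w(M)}.
\]
Let $W^\star$ be the weight of the unique minimum-weight PM $M^\star$. Every other PM in the sum has weight $\geq W^\star+1$, so its contribution is divisible by $2^{W^\star+1}$, and hence $\det(B)\equiv\mathrm{sgn}(M^\star)\cdot 2^{W^\star}\pmod{2^{W^\star+1}}$. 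Thus $\det(B)\neq 0$ iff $G$ has a PM. To place this inside \SPL, I would work with the polynomial matrix $B(z)$ defined by $B(z)_{ij}=z^{w(e_{ij})}$; its determinant is a polynomial in $z$ whose lowest-degree coefficient equals $\mathrm{sgn}(M^\star)\in\{\pm 1\}$ when a PM exists and which is identically zero otherwise. Each coefficient of $\det(B(z))$ is the gap of a nondeterministic logspace machine that guesses a permutation, verifies it is a PM of the prescribed weight, and branches by sign. Combining a nondeterministic guess of $W^\star$, a verification that all lower-weight coefficients vanish, and a sign-normalization across complementary computations (in the style of Allender, Reinhardt and Zhou \cite{AllenderMatching99}) produces a single logspace machine whose gap lies in $\{0,1\}$ and equals $1$ iff $G$ admits a PM, which is the defining characterization of \SPL.

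For \constructMatch, the strengthened hypothesis that uniqueness persists in every $G\setminus E'$ guarantees that the preceding \SPL\ decision procedure is valid when applied to any edge-deletion subgraph. I would then iterate over the edges $e$ of $G$ in logspace: for each $e$ the condition $e\in M^\star$ is equivalent to ``the minimum PM weight in $G\setminus\{e\}$ strictly exceeds the minimum PM weight in $G$''. This weight comparison can be performed by $O(\log n)$ binary-search queries of the form ``does $H$ admit a PM of weight at most $k$?'', each implementable as an \SPL\ query on a suitably re-weighted subgraph (so that the isolating property still holds on the modified instance). Outputting the edges flagged by these queries constructs $M^\star$, placing \constructMatch\ in $\FL^{\SPL}$.

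The main obstacle will be the \SPL\ gap-normalization step. The quantity $\det(B)$ itself is an arbitrary integer, and the crux of the argument is to extract the lowest-degree coefficient of $\det(B(z))$, which (thanks to uniqueness) lies in $\{-1,0,+1\}$, and then to fold the sign so that the resulting gap is in $\{0,1\}$. Making the nondeterministic machine explicit and arranging for it to guess $W^\star$ consistently in logspace is where the technical work concentrates; the cited reference provides a clean realization.
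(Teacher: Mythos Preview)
Your proposal is correct and follows essentially the same approach as the paper: both pass through the determinant of the weighted biadjacency matrix, read off the coefficient of each monomial $z^k$ as a \GapL\ function $f(G,k)$, use uniqueness to guarantee $f(G,k^\star)\in\{-1,+1\}$ at the minimum weight $k^\star$, and square to normalize the sign.

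Two small points where the paper is more explicit. First, for the aggregation over $k$ you describe ``guessing $W^\star$ and verifying that lower-weight coefficients vanish''; the paper instead writes down the single \GapL\ expression
\[
g(G)=1-\prod_{k}\bigl(1-f(G,k)^2\bigr),
\]
which is automatically in $\{0,1\}$ because the factor at $k=k^\star$ vanishes (killing the whole product) whenever a perfect matching exists, and all factors equal $1$ otherwise. This sidesteps any need to reason about nondeterministic guessing of $W^\star$ inside the machine. Second, for the search you propose $O(\log n)$ binary-search queries of the form ``is there a perfect matching of weight at most $k$?'' on a \emph{re-weighted} instance; re-weighting is dangerous here since it may destroy the isolation hypothesis. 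The paper simply iterates $k$ linearly over the (polynomially bounded) weight range, querying $f(G,k)$ and $f(G\setminus\{e\},k)$ directly, and outputs $e$ exactly when $w_{G\setminus\{e\}}>w_G$. Since the weight range is polynomial, linear scan is already logspace, and no re-weighting is needed.
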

\begin{proof}[Sketch of proof]
Let $w_{max}$ and $w_{min}$ be the maximum and minimum possible weights respectively, that an edge in $G$ can get. Then any perfect matching in $G$ will have a weight from the set $W = \{k : k \in \mathbb{Z}, n\cdot w_{min} \leq k \leq n\cdot w_{max} \}$. Similar to \cite{AllenderMatching99}, there exists a {\GapL} function $f$, such that for some value of $k \in W$, $|f(G,k)| = 1$ if $G$ has a perfect matching of weight $k$, else $f(G,k)=0$ for all values of $k$. Note that in \cite{AllenderMatching99} the authors actually give a $\GapL/\poly$ function since the weight function for the graphs (which are unweighted to begin with) are required as an advice in their {\GapL} machine. Here we consider weighted graphs, thus eliminating the need for any advice. Now consider the function 
\[
g(G) = 1- \prod_k \left(1-(f(G,k))^2\right). 
\]
By definition, $g(G) =1$ if $G$ has a perfect matching, else it is $0$.

To compute a perfect matching in $G$, we will construct a logspace transducer that makes several queries to the function $f$ defined above. For a graph $G'$ having a unique minimum weight perfect matching (say $M'$), the weight of $M'$ can be computed by iteratively querying the function $f(G',k)$ for values of $k \in W$ in an increasing order, starting from $n\cdot w_{min}$. The value $k$, for which the function outputs a non-zero value for the first time, is the weight of $M'$. We denote this weight by $w_{G'}$. First compute $w_G$. For an $e$ in $G$, define the graph $G^{-e} = G \setminus \{e\}$. Now compute $w_{G^{-e}}$ for every edge $e$ in $G$. Output the edges $e$ for which $w_{G^{-e}} > w_{G}$. The set of outputted edges comprise a perfect matching (in fact the minimum weight perfect matching) because deleting an edge in this set had increased the weight of the minimum weight perfect matching in the resulting graph. 
\end{proof}

\section{Embedding on a Grid}


\begin{theorem} \label{theorem:main_reduction}
Given a 2-cell combinatorial embedding of a graph $G$ of constant genus, there is a logspace transducer that constructs a graph $G' \in \GG$, such that, there is a perfect matching in $G$ iff there is a perfect matching in $G'$. Moreover, given a perfect matching $M'$ in $G'$, in logspace one can construct a perfect matching $M$ in $G$.  
\end{theorem}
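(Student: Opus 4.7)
The plan is to obtain this theorem as the composition of the sequence of logspace reductions depicted in Figure~\ref{fig:plan}. Given the input graph $G$ with its 2-cell combinatorial embedding, I would first invoke Lemma~\ref{lemma:polySchema} to convert the combinatorial embedding into an embedding on a polygonal schema with $O(g)$ sides. Next, I would apply Theorem~\ref{theorem:normal} to rewrite the polygonal schema into normal form. If the underlying surface is non-orientable, I would then apply Theorem~\ref{theorem:nonorientable} to reduce to a graph embedded on an orientable polygonal schema in normal form. Finally, I would invoke Lemma~\ref{lemma:grid_reduction} to transform this embedding into an embedding on a genus $g$ grid graph, producing a graph $G' \in \GG$.

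To conclude that the overall transducer is in logspace, I would use the fact that logspace reductions compose: since $g$ is constant, only a constant number of reductions are chained, and each stage outputs an object polynomially bounded in the size of its input. Each of the intermediate reductions is already asserted (in the lemmas/theorems cited above) to be logspace computable, so standard composition of logspace transducers yields an overall logspace computation from $G$ to $G'$.

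For the matching-preservation claim, I would rely on the fact, recorded in the caption of Figure~\ref{fig:plan}, that every one of the cited reductions is matching preserving in both directions: $G$ has a perfect matching iff the output of the reduction has a perfect matching, and moreover a perfect matching of the output can be converted in logspace back to a perfect matching of the input. Composing these recovery maps (again a constant number of compositions), I obtain a logspace procedure that, given a perfect matching $M'$ in $G'$, unwinds through the intermediate graphs and returns a perfect matching $M$ in the original graph $G$.

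The bulk of the real work therefore sits inside the four supporting results (Lemmas~\ref{lemma:polySchema}, \ref{lemma:grid_reduction} and Theorems~\ref{theorem:normal}, \ref{theorem:nonorientable}), and the step I expect to be the most delicate in that pipeline is Lemma~\ref{lemma:grid_reduction}, where one must explicitly realize the polygonal schema as a grid while respecting the pairing of boundary sides and maintaining both bipartiteness and matching equivalence in logspace. Theorem~\ref{theorem:main_reduction} itself is then essentially a bookkeeping step that packages these reductions into one statement.
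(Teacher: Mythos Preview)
Your outline matches the paper's proof almost exactly: compose the supporting reductions and appeal to closure of logspace under composition. There is one small but genuine slip. You write that Theorem~\ref{theorem:nonorientable} yields ``a graph embedded on an orientable polygonal schema \emph{in normal form}.'' It does not: the statement of Theorem~\ref{theorem:nonorientable} only promises an embedding on an orientable polygonal schema with $4g'-2$ sides, and the paper's own proof of Theorem~\ref{theorem:main_reduction} explicitly notes that the output ``need not be in the normal form'' and therefore re-applies Corollary~\ref{cor:PS_reduction} (i.e., Lemma~\ref{lemma:polySchema} plus Theorem~\ref{theorem:normal}) before invoking Lemma~\ref{lemma:grid_reduction}. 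This matters because Lemma~\ref{lemma:grid_reduction} takes as input a graph in \kGonBi, which by definition requires the polygonal schema to be in normal form. The fix is exactly what the paper does: insert one more normalization pass after the non-orientable-to-orientable step. With that correction your argument is the paper's argument.
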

\begin{proof}
Using Corollary  \ref{cor:PS_reduction} reduce $G$ to a graph $G_1$ that has an embedding on the polygonal schema in the normal form. If the schema is non-orientable, then by applying Theorem \ref{theorem:nonorientable} we get a graph $G_2$ along with its embedding on an orientable polygonal schema (need not be in the normal form). Again by Corollary \ref{cor:PS_reduction}, we reduce it to a graph on a polygonal schema in the normal form. Finally we apply Lemma \ref{lemma:grid_reduction}, we get the desired graph.
\end{proof}

\subsubsection{Combinatorial Embedding to a Polygonal Schema}
\begin{lemma} [\cite{ABCDR09}]
\label{lemma:snsCycle}
 Let $G$ be a graph embedded on a surface, and let $T$ be a spanning 
tree of $G$. Then there is an edge $e \in E(G)$ such that $T \cup \{e\}$
 contains a non-separating cycle. 
\end{lemma}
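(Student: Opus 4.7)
The plan is to argue by contradiction, using the fact that fundamental cycles of non-tree edges span the cycle space of $G$ over $\mathbb{Z}_2$. For each non-tree edge $e$, write $C_e$ for the unique cycle contained in $T \cup \{e\}$; classically the set $\{C_e : e \in E(G) \setminus E(T)\}$ is a $\mathbb{Z}_2$-basis of the cycle space $\mathcal{C}_1(G)$ under symmetric difference. Suppose for contradiction that every such $C_e$ is separating in the surface $\Gamma$ on which $G$ is embedded.

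The key observation is that the collection of separating 1-cycles in $\mathcal{C}_1(G)$ is a $\mathbb{Z}_2$-subspace. As recalled in the preliminaries, a 1-cycle is separating exactly when it bounds in $\Gamma$, i.e., when it lies in $\mathcal{B}_1(\Gamma)$; since $\mathcal{B}_1(\Gamma)$ is a subgroup of $\mathcal{C}_1(\Gamma)$ under $\Delta$, intersecting with $\mathcal{C}_1(G)$ yields a subspace. Under our assumption every element of $\mathcal{C}_1(G)$, being a symmetric difference of the fundamental cycles $C_e$, is therefore separating.

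To reach a contradiction I would exhibit a single non-separating cycle in $G$. Since $G$ is 2-cell embedded on a surface $\Gamma$ of positive genus (the only case in which the conclusion is non-vacuous), the inclusion induces a surjection $\mathcal{C}_1(G) \to H_1(\Gamma;\mathbb{Z}_2)$: any closed curve on $\Gamma$ can be pushed off the interiors of the disk-faces and homotoped onto the 1-skeleton without changing its homology class. Because the genus is positive we have $H_1(\Gamma;\mathbb{Z}_2) \neq 0$, so some cycle of $G$ represents a nontrivial class and hence is non-separating. This contradicts the conclusion of the previous paragraph, so at least one fundamental cycle $C_e$ must be non-separating, proving the lemma.

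The main obstacle I anticipate is rigorously justifying the surjectivity of $\mathcal{C}_1(G) \to H_1(\Gamma;\mathbb{Z}_2)$. It depends crucially on the 2-cell assumption: each face is a topological disk, which is what lets us deform arbitrary closed curves across faces onto the 1-skeleton without altering their homology class. This is a standard fact from topological graph theory and can be quoted from \cite{MoharThomassen01}; the case analysis for orientable versus non-orientable $\Gamma$ is uniform once one works with $\mathbb{Z}_2$-coefficients, since $H_1(\Gamma;\mathbb{Z}_2)$ is nonzero whenever $\Gamma$ is not a sphere. Granted this ingredient, the remainder of the argument is routine linear algebra over $\mathbb{Z}_2$ together with the definition of $H_1$.
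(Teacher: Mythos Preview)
Your argument is correct. Note, however, that the paper does not supply its own proof of this lemma: it is quoted from \cite{ABCDR09}, with only the remark that the orientability hypothesis in the original source was never used. So there is no in-paper proof to compare against.

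Your homological approach is a clean way to obtain the result and, by working over $\mathbb{Z}_2$ throughout, it handles the orientable and non-orientable cases uniformly---exactly the point the paper wants to make in its one-line remark after the statement. The two substantive ingredients you isolate are the right ones: (i) fundamental cycles span $\mathcal{C}_1(G)$ over $\mathbb{Z}_2$, and (ii) for a 2-cell embedding the natural map $\mathcal{C}_1(G)\to H_1(\Gamma;\mathbb{Z}_2)$ is surjective (indeed, $G$ is the 1-skeleton of a CW decomposition of $\Gamma$, so this is just the quotient map in cellular homology). Given these, the contradiction is immediate once one identifies ``separating'' with ``null-homologous over $\mathbb{Z}_2$'' for simple closed curves, which the paper itself records in the proof of Lemma~\ref{lem:cabello}.

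One small wording point: when the genus is zero the conclusion of the lemma is not \emph{vacuous} but \emph{false} (no cycle on the sphere is non-separating). What you mean---and what the paper's usage confirms---is that positive genus is an implicit hypothesis; the lemma is only invoked inside a loop that terminates once the surface becomes planar. Likewise, the 2-cell assumption is implicit in the paper's standing conventions on combinatorial embeddings. You identify both assumptions correctly; just state them as hypotheses rather than as cases to be set aside.
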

Notice that in \cite{ABCDR09} the graph was required to be embedded on an
orientable surface but the proof did not use this requirement.

\begin{definition} Given a cycle (or path) $C$ in an embedded
graph $G$, define by $G \cut C$ the graph constructed by ``cutting''
the edges incident on the cycle from the right. In other words, 
the neighbors of $u \in C$ (which are not on the cycle) can be partitioned 
into two sets, arbitrarily called left and right. For every neighbbor $v$
of $u$ which lies to the right of $C$, cut the edge $(u,v)$ into two
pieces $(u,x_{uv})$ and $(y_{uv}, v)$ where $x_{uv}, y_{uv}$ are (new) spurious
vertices.  We add spurious edges between consecutive spurious vertices along 
the cut and label all the newly formed spurious edges with the label $L_C$ 
along the left set and $L_C^{-1}$ along the right set.
(see Figure \ref{fig:cut}). 

\begin{figure}[h]
\centering
\begin{tikzpicture}[scale=.85,shorten >=.35mm,>=latex]
[decoration=zigzag]
\draw [thick,-] (-1,0) -- (0,0) -- (2,0) -- (4,0) -- (5,0);
\draw [thick,dashed] (-2,0) -- (-1,0);
\draw [thick,dashed] (5,0) -- (6,0);

\draw [-] (0,0) -- (-.5,1);
\draw [-] (0,0) -- (.5,1);
\draw [-] (0,0) -- (0,1);
\draw [-] (0,0) -- (-.5,-1);
\draw [-] (0,0) -- (.5,-1);
\draw [-] (2,0) -- (2,1);
\draw [-] (2,0) -- (1.5,-1);
\draw [-] (2,0) -- (2.5,-1);
\draw [-] (2,0) -- (2,-1);
\draw [-] (4,0) -- (3.5,1);
\draw [-] (4,0) -- (4.5,1);
\draw [-] (4,0) -- (4,1);
\draw [-] (4,0) -- (3.25,-1);
\draw [-] (4,0) -- (3.75,-1);
\draw [-] (4,0) -- (4.25,-1);
\draw [-] (4,0) -- (4.75,-1);
\draw [fill, color=black!90] (0,0) circle(0.5 mm);
\draw [fill, color=black!90] (2,0) circle(0.5 mm);
\draw [fill, color=black!90] (4,0) circle(0.5 mm);

\draw [fill, color=black!90] (-.5,1) circle(0.5 mm);
\draw [fill, color=black!90] (0,1) circle(0.5 mm);
\draw [fill, color=black!90] (.5,1) circle(0.5 mm);
\draw [fill, color=black!90] (2,1) circle(0.5 mm);
\draw [fill, color=black!90] (3.5,1) circle(0.5 mm);
\draw [fill, color=black!90] (4,1) circle(0.5 mm);
\draw [fill, color=black!90] (4.5,1) circle(0.5 mm);

\draw [fill, color=black!90] (-.5,-1) circle(0.5 mm);
\draw [fill, color=black!90] (.5,-1) circle(0.5 mm);
\draw [fill, color=black!90] (1.5,-1) circle(0.5 mm);
\draw [fill, color=black!90] (2,-1) circle(0.5 mm);
\draw [fill, color=black!90] (2.5,-1) circle(0.5 mm);
\draw [fill, color=black!90] (3.25,-1) circle(0.5 mm);
\draw [fill, color=black!90] (3.75,-1) circle(0.5 mm);
\draw [fill, color=black!90] (4.25,-1) circle(0.5 mm);
\draw [fill, color=black!90] (4.75,-1) circle(0.5 mm);

\node at (-.3,.2) {$X$};
\node at (1.7,.2) {$Y$};
\node at (3.7,.2) {$Z$};
\node [below] at (-.5,-1) {$X_1$};
\node [below] at (.5,-1) {$X_2$};
\node [below] at (1.5,-1) {$Y_1$};
\node [below] at (2,-1) {$Y_2$};
\node [below] at (2.5,-1) {$Y_3$};
\node [below] at (3.25,-1) {$Z_1$};
\node [below] at (3.75,-1) {$Z_2$};
\node [below] at (4.25,-1) {$Z_3$};
\node [below] at (4.75,-1) {$Z_4$};

\node [below] at (-1.5,0) {$C$};
\node at (2,-2) {(a)};
\draw [thick,-] (-1,-5) -- (0,-5) -- (2,-5) -- (4,-5) -- (5,-5);
\draw [thick,dashed] (-2,-5) -- (-1,-5);
\draw [thick,dashed] (5,-5) -- (6,-5);
\draw [very thick,dotted] (-1,-6) -- (5.25,-6);
\draw [very thick,dotted] (-1,-7.5) -- (5.25,-7.5);
\draw [-] (0,-5) -- (-.5,-4);
\draw [-] (0,-5) -- (.5,-4);
\draw [-] (0,-5) -- (0,-4);
\draw [-] (0,-5) -- (-.5,-6);
\draw [-] (0,-5) -- (.5,-6);
\draw [-] (-.5,-7.5) -- (-.5,-8.5);
\draw [-] (.5,-7.5) -- (.5,-8.5);
\draw [-] (2,-5) -- (2,-4);
\draw [-] (2,-5) -- (1.5,-6);
\draw [-] (2,-5) -- (2.5,-6);
\draw [-] (2,-5) -- (2,-6);
\draw [-] (1.5,-8.5) -- (1.5,-7.5);
\draw [-] (2.5,-8.5) -- (2.5,-7.5);
\draw [-] (2,-8.5) -- (2,-7.5);
\draw [-] (4,-5) -- (3.5,-4);
\draw [-] (4,-5) -- (4.5,-4);
\draw [-] (4,-5) -- (4,-4);
\draw [-] (4,-5) -- (3.25,-6);
\draw [-] (4,-5) -- (3.75,-6);
\draw [-] (4,-5) -- (4.25,-6);
\draw [-] (4,-5) -- (4.75,-6);
\draw [-] (3.25,-8.5) -- (3.25,-7.5);
\draw [-] (3.75,-8.5) -- (3.75,-7.5);
\draw [-] (4.25,-8.5) -- (4.25,-7.5);
\draw [-] (4.75,-8.5) -- (4.75,-7.5);
\draw [fill, color=black!90] (0,-5) circle(0.5 mm);
\draw [fill, color=black!90] (2,-5) circle(0.5 mm);
\draw [fill, color=black!90] (4,-5) circle(0.5 mm);

\draw [fill, color=black!90] (-.5,-4) circle(0.5 mm);
\draw [fill, color=black!90] (0,-4) circle(0.5 mm);
\draw [fill, color=black!90] (.5,-4) circle(0.5 mm);
\draw [fill, color=black!90] (2,-4) circle(0.5 mm);
\draw [fill, color=black!90] (3.5,-4) circle(0.5 mm);
\draw [fill, color=black!90] (4,-4) circle(0.5 mm);
\draw [fill, color=black!90] (4.5,-4) circle(0.5 mm);

\foreach \x in {-.5,.5,1.5,2,2.5,3.25,3.75,4.25,4.75}
{
	\foreach \y in {-6,-7.5,-8.5}
	{
		\draw [fill, color=black!90] (\x,\y) circle(0.5 mm);
	}	
}

\node at (-.3,-4.8) {$X$};
\node at (1.7,-4.8) {$Y$};
\node at (3.7,-4.8) {$Z$};
\node [below] at (-.5,-6) {$X_1'$};
\node [below] at (.5,-6) {$X_2'$};
\node [below] at (1.5,-6) {$Y_1'$};
\node [below] at (2,-6) {$Y_2'$};
\node [below] at (2.5,-6) {$Y_3'$};
\node [below] at (3.25,-6) {$Z_1'$};
\node [below] at (3.75,-6) {$Z_2'$};
\node [below] at (4.25,-6) {$Z_3'$};
\node [below] at (4.75,-6) {$Z_4'$};
\node [above] at (-.5,-7.5) {$X_1''$};
\node [above] at (.5,-7.5) {$X_2''$};
\node [above] at (1.5,-7.5) {$Y_1''$};
\node [above] at (2,-7.5) {$Y_2''$};
\node [above] at (2.5,-7.5) {$Y_3''$};
\node [above] at (3.25,-7.5) {$Z_1''$};
\node [above] at (3.75,-7.5) {$Z_2''$};
\node [above] at (4.25,-7.5) {$Z_3''$};
\node [above] at (4.75,-7.5) {$Z_4''$};
\node [below] at (-.5,-8.5) {$X_1$};
\node [below] at (.5,-8.5) {$X_2$};
\node [below] at (1.5,-8.5) {$Y_1$};
\node [below] at (2,-8.5) {$Y_2$};
\node [below] at (2.5,-8.5) {$Y_3$};
\node [below] at (3.25,-8.5) {$Z_1$};
\node [below] at (3.75,-8.5) {$Z_2$};
\node [below] at (4.25,-8.5) {$Z_3$};
\node [below] at (4.75,-8.5) {$Z_4$};

\node [below] at (-1.5,-5) {$C$};
\node at (2,-9.5) {(b)};
\end{tikzpicture}
\caption{An example of the cut operation $\cut$, cutting graph $G$ along cycle (or path) $C$. (a) Part of graph $G$ and cycle $C$. (b) Part of the resulting graph $G \cut C$, with the dotted lines representing the spurious edges.}
\label{fig:cut}
\end{figure}
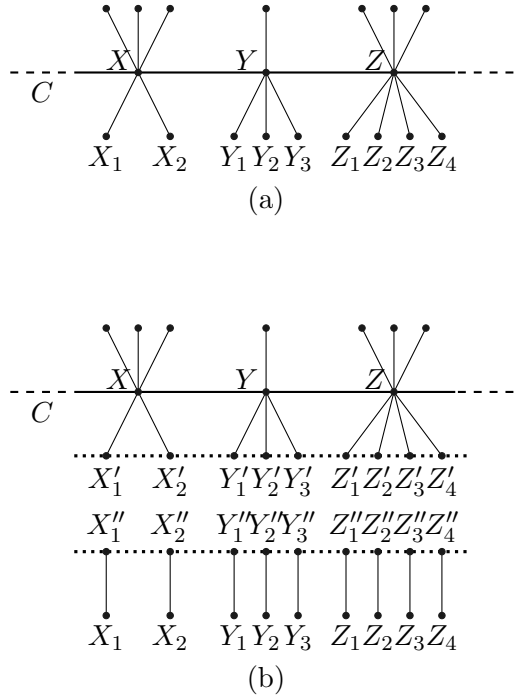

Also, if $C$ is a path, its endpoints will
lie on two paths. Consider the first path - if the two edges on either side
of $C$ on this path have the same label $L_1$. This can be broken into two cases - firstly, if the left and right side of this endpoint are the same (in other words, the path is a cycle). In this case, we just keep the same label $L_1$. When the left and right side of this endpoint are distinct, we will need to split the
label into two or three new labels as detailed below and similarly
for the other path and common label $L_2$. We will only describe the case
when $L_1,L_2$ are both defined - the other cases are similar and simpler.

 First assume that $L_1 \neq L_2$
and $L_1 \neq  L_2^{-1}$. Then we will split remove labels $L_1,L_2$
and replace them by four new labels say $L'_{1,C}, L''_{1,C}$ and
$L'_{2,C}, L''_{2,C}$, respectively for the two sides of the intersection.
If, on the other hand, $L_1$ is the same as $L_2$ or its inverse - then there
are two subcases. Firstly, if the path $C$ is between two copies of the same
vertex then we replace $L_1$ by two new labels $L'_{1,C}, L''_{1,C}$
one for either side of the cut. $L_2$ being a copy or an inverse copy
of $L_1$ splits automatically. The second case is if $C$ is between two distinct
points on two copies or inverse copies. Then we split $L_1$ into three
parts according to the two points.
The rotation system is modified appropriately. We illustrate this in Figure \ref{fig:cut_operation_2}.
\end{definition}

Notice that in the process of cutting, for every new label $L_C$ we are adding 
at most $4$ new labels.

\begin{figure}[h]
\centering

\begin{tikzpicture}[scale=.6,shorten >=.35mm,>=latex]

\foreach \y in {0}{
\draw [] [-] (0,\y-1) -- (0,\y+1);
\draw [][-] (3,\y-1) -- (3,\y+1);
\draw [][->] (0,\y) -- (3,\y);

\node [left] at (0,\y-.5) {$L_1$};
\node [left] at (0,\y+.5) {$L_1$};
\node [right] at (3,\y-.5) {$L_2$};
\node [right] at (3,\y+.5) {$L_2$};
\node [below] at (1.5,\y) {$C$};

\draw [][->] (5,\y) -- (8,\y);
\node [above] at (6.5,\y) {reduction};

\draw [] [-] (10,\y-1.5) -- (10,\y-.5);
\draw [] [-] (10,\y+1.5) -- (10,\y+.5);
\draw [][-] (13,\y-1.5) -- (13,\y-.5);
\draw [][-] (13,\y+1.5) -- (13,\y+.5);
\draw [][->] (10,\y+.5) -- (13,\y+.5);
\draw [][->] (10,\y-.5) -- (13,\y-.5);

\node [left] at (10,\y-1) {$L'_{1,C}$};
\node [left] at (10,\y+1) {$L''_{1,C}$};
\node [right] at (13,\y-1) {$L'_{2,C}$};
\node [right] at (13,\y+1) {$L''_{2,C}$};
\node [below] at (11.5,\y-0.5) {$L_C^{-1}$};
\node [above] at (11.5,\y+0.5) {$L_C$};

\node at (6.5,\y-2) {(a)};
}
\foreach \y in {-4}{
\draw [] [-] (0,\y-1) -- (0,\y+1);
\draw [][-] (3,\y-1) -- (3,\y+1);
\draw [][->] (0,\y) -- (3,\y);

\node [left] at (0,\y-.7) {$L_1$};
\node [left] at (0,\y+.7) {$L_1$};
\node [right] at (3,\y-.7) {$L_1$};
\node [right] at (3,\y+.7) {$L_1$};
\node [below] at (1.5,\y) {$C$};
\node [left] at (0,\y) {$v$};
\node [right] at (3,\y) {$v$};
\draw [fill, color=black!90] (0,\y) circle(0.5 mm);
\draw [fill, color=black!90] (3,\y) circle(0.5 mm);

\draw [][->] (5,\y) -- (8,\y);
\node [above] at (6.5,\y) {reduction};

\draw [] [-] (10,\y-1.5) -- (10,\y-.5);
\draw [] [-] (10,\y+1.5) -- (10,\y+.5);
\draw [][-] (13,\y-1.5) -- (13,\y-.5);
\draw [][-] (13,\y+1.5) -- (13,\y+.5);
\draw [][->] (10,\y+.5) -- (13,\y+.5);
\draw [][->] (10,\y-.5) -- (13,\y-.5);

\node [left] at (10,\y-1) {$L'_{1,C}$};
\node [left] at (10,\y+1) {$L''_{1,C}$};
\node [right] at (13,\y-1) {$L'_{1,C}$};
\node [right] at (13,\y+1) {$L''_{1,C}$};
\node [below] at (11.5,\y-0.5) {$L_C^{-1}$};
\node [above] at (11.5,\y+0.5) {$L_C$};
\node [left] at (10,\y+.3) {$v$};
\node [right] at (13,\y+.3) {$v$};
\node [left] at (10,\y-.3) {$v$};
\node [right] at (13,\y-.3) {$v$};
\draw [fill, color=black!90] (10,\y+.5) circle(0.5 mm);
\draw [fill, color=black!90] (13,\y+.5) circle(0.5 mm);
\draw [fill, color=black!90] (10,\y-.5) circle(0.5 mm);
\draw [fill, color=black!90] (13,\y-.5) circle(0.5 mm);

\node at (6.5,\y-2) {(b)};
}

\foreach \y in {-10}{
\draw [] [-] (0,\y-2) -- (0,\y+1);
\draw [][-] (3,\y-2) -- (3,\y+1);
\draw [][->] (0,\y-1) -- (3,\y);

\node [left] at (0,\y-1.5) {$L_1$};
\node [left] at (0,\y+.7) {$L_1$};
\node [right] at (3,\y-1.5) {$L_1$};
\node [right] at (3,\y+.7) {$L_1$};
\node [below] at (1.5,\y-.4) {$C$};
\node [left] at (0,\y) {$w$};
\node [right] at (3,\y) {$w$};
\node [left] at (0,\y-1) {$v$};
\node [right] at (3,\y-1) {$v$};
\draw [fill, color=black!90] (0,\y) circle(0.5 mm);
\draw [fill, color=black!90] (3,\y) circle(0.5 mm);
\draw [fill, color=black!90] (0,\y-1) circle(0.5 mm);
\draw [fill, color=black!90] (3,\y-1) circle(0.5 mm);

\draw [][->] (5,\y) -- (8,\y);
\node [above] at (6.5,\y) {reduction};

\draw [] [-] (10,\y-1.5) -- (10,\y-.5);
\draw [] [-] (10,\y+2.5) -- (10,\y+.5);
\draw [][-] (13,\y-2.5) -- (13,\y-.5);
\draw [][-] (13,\y+1.5) -- (13,\y+.5);
\draw [][->] (10,\y+.5) -- (13,\y+.5);
\draw [][->] (10,\y-.5) -- (13,\y-.5);

\node [left] at (10,\y-1) {$L'_{1,C}$};
\node [left] at (10,\y+1) {$L''_{1,C}$};
\node [right] at (13,\y-1) {$L''_{1,C}$};
\node [right] at (13,\y+1) {$L'''_{1,C}$};
\node [left] at (10,\y+2) {$L'''_{1,C}$};
\node [right] at (13,\y-2) {$L'_{1,C}$};
\node [below] at (11.5,\y-0.5) {$L_C^{-1}$};
\node [above] at (11.5,\y+0.5) {$L_C$};
\node [left] at (10,\y+.3) {$v$};
\node [right] at (13,\y+.3) {$w$};
\node [left] at (10,\y-.3) {$v$};
\node [right] at (13,\y-.3) {$w$};
\node [left] at (13,\y-1.5) {$v$};
\node [right] at (10,\y+1.5) {$w$};
\draw [fill, color=black!90] (10,\y+.5) circle(0.5 mm);
\draw [fill, color=black!90] (13,\y+.5) circle(0.5 mm);
\draw [fill, color=black!90] (10,\y-.5) circle(0.5 mm);
\draw [fill, color=black!90] (13,\y-.5) circle(0.5 mm);
\draw [fill, color=black!90] (10,\y+1.5) circle(0.5 mm);
\draw [fill, color=black!90] (13,\y-1.5) circle(0.5 mm);

\node at (6.5,\y-3) {(c)};
}
\end{tikzpicture}
\caption{Cutting along a path $C$ when (a) $L_1 \ne L_2$ and $L_1 \ne L^{-1}_2$, (b) $L_1 = L_2$ or $L_1 = L_2^{-1}$ and $C$ is between copies of the same vertex $v$, and (c) $L_1 = L_2$ or $L_1 = L_2^{-1}$ and $C$ is between distinct vertices $v$ and $w$.}

\label{fig:cut_operation_2}
\end{figure}
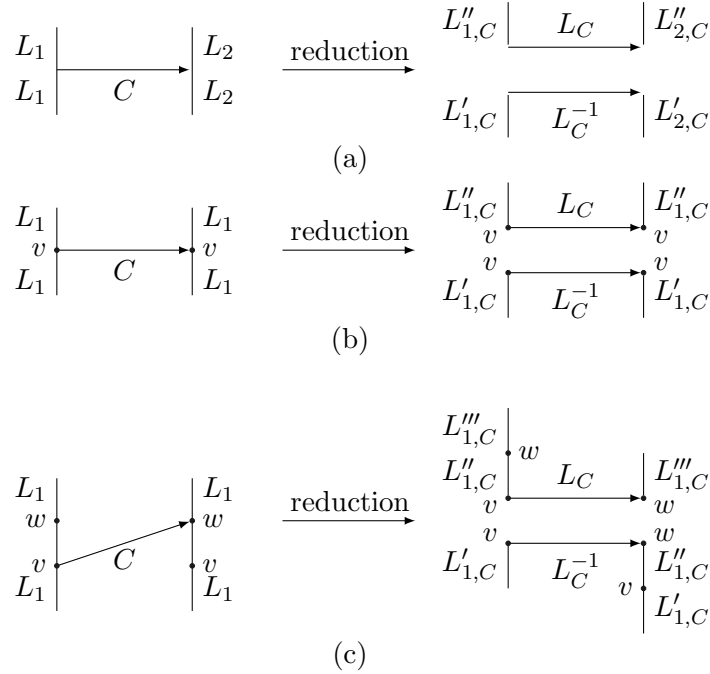

Given a graph $G_i$ embedded on a surface, potentially with
spurious edges, we can find $C_{i+1}$, a non-separating cycle 
(which does not use a spurious edge) by invoking Lemma~\ref{lemma:snsCycle}.
Define $G_{i+1}$ to be $G_i \cut C_{i+1}$.

Starting with $G_0 = G$ of genus $g$ and repeating the above operation at 
most $g$ times, we get a planar graph $H$ with at most $2g$ spurious 
faces (which consist of spurious vertices and edges).

Now find a spanning tree of this graph which does not use a spurious edge - 
that such a tree exists follows from noticing that the graph without spurious
edges is still connected.  Find a tree path connecting any two
spurious faces. Cut along this path to combine the two spurious faces into one 
larger spurious face. Repeat the operation till all the spurious faces are 
merged into one spurious face and re-embed the planar graph so that it forms 
the external face.

It is easy to see that the procedure above can be performed in logspace, provided
that $g$ is constant. Thus we have sketched the proof of the following:
\begin{lemma}\label{lemma:polySchema}  
Given the combinatorial embedding of a constant genus
graph we can find a polygonal schema for the graph in logspace.
\end{lemma}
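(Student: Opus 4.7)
The plan is to realize the iterative cutting procedure outlined after Lemma \ref{lemma:snsCycle} as an explicit logspace transducer. First I would invoke Lemma \ref{lemma:snsCycle} on the input graph $G_0 = G$ to locate an edge $e_1$ that, together with a spanning tree, produces a non-separating cycle $C_1$, and then form $G_1 = G_0 \cut C_1$ using the cut operation defined in the excerpt. Each cut reduces the genus of the underlying surface by one and introduces one pair of new labelled sides, so after at most $g$ rounds the graph $H = G_g$ is planar and carries at most $2g$ spurious boundary faces coming from the cut edges.

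The second stage consolidates these spurious faces into a single outer face. I would compute a spanning tree of $H$ that avoids all spurious edges (such a tree exists because, as noted, removing the spurious edges leaves the graph connected), and then repeatedly pick two spurious faces, locate the unique tree path between a vertex of one and a vertex of the other, and cut along that path. Each such cut merges two spurious faces into one, and at most $2g-1$ iterations suffice to merge all of them. Finally, I would re-embed the planar result so that the unique spurious face is the outer face, and read off the cyclic sequence of spurious edge-labels along its boundary as the desired polygonal schema of $G$.

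The main obstacle is carrying out each of the $O(g)$ rounds in logspace, since after $i$ rounds the intermediate graph $G_i$ is not the input but a derived object containing spurious vertices, spurious edges, and modified rotation systems and labels. I would handle this by the standard trick of logspace composition: rather than materialising $G_i$, I describe a logspace routine that, on demand, reports adjacency, rotation, and label information for $G_i$ by recursively simulating the $i$ cuts on top of the original combinatorial embedding. Because $g$ is a fixed constant, the recursion has constant depth, each level uses only $O(\log n)$ workspace to store a pointer into $G_{i-1}$ together with a constant amount of bookkeeping for the relabelling rules in the definition of $\cut$, and the total size of the final labelled schema is polynomial in $n$.

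The hardest individual subtask is executing Lemma \ref{lemma:snsCycle} in logspace on top of this on-demand representation, since that requires, for each candidate edge $e$, deciding whether the fundamental cycle $T \cup \{e\}$ is separating in $G_i$. This question reduces to undirected connectivity in $G_i$ after deleting the candidate cycle, and is therefore solvable in logspace by Reingold's theorem. With that subroutine in hand, the remaining ingredients, namely computing spanning trees, tracing tree paths between designated faces, and reading off the final cyclic ordering, are all routine logspace operations on graphs given by combinatorial embeddings, and composing them yields the lemma.
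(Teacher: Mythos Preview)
Your proposal is correct and follows essentially the same two-stage procedure as the paper: repeatedly cut along non-separating cycles found via Lemma~\ref{lemma:snsCycle} until the graph is planar with $O(g)$ spurious faces, then merge those faces by cutting along tree paths and re-embed with the merged face as the outer face. Your write-up is in fact more explicit than the paper's sketch, since you spell out the logspace-composition argument for accessing the intermediate graphs $G_i$ on demand and you identify the use of Reingold's theorem to test whether a fundamental cycle is separating; the paper simply asserts that the procedure runs in logspace for constant $g$.
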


\subsubsection{Normalizing a Polygonal Schema}

We adapt the algorithmic proof of Brahana-Dehn-Heegaard (BDH) \cite{Brahana22,DehnHeegaard07} classification
theorem as described in Vegter-Yap \cite{VegterYap90} so that it runs in logspace for constant genus graphs. The algorithm
starts with a polygonal schema and uses the following five transforms
$O(m)$ times to yield a normalized polygonal schema, where the original 
polygonal schema has $2m$ sides.
\begin{enumerate}
\item[A.] Replace $X\sigma\bar{\sigma}$ by $X$ (Example given in Figure \ref{fig:redA}).
\vspace{-2mm}
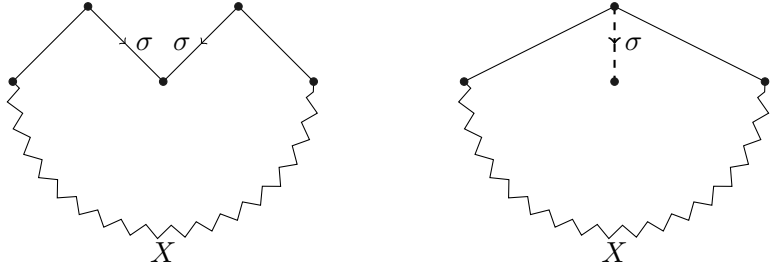
\begin{figure}[ht]
\centering
\begin{tikzpicture}
[decoration=zigzag]
\draw (0,0) -- (1,1);
\draw [->] (1,1)  -- (1.5,.5);
\draw (1.5,.5) -- (2,0);
\draw (2,0)  -- (2.5,.5);
\draw [<-] (2.5,.5) -- (3,1);
\draw (3,1) -- (4,0);
\draw [decorate] (0,0) arc(180:360:2);

\draw [fill, color=black!90] (0,0) circle(0.5 mm);
\draw [fill, color=black!90] (1,1) circle(0.5 mm);
\draw [fill, color=black!90] (2,0) circle(0.5 mm);
\draw [fill, color=black!90] (3,1) circle(0.5 mm);
\draw [fill, color=black!90] (4,0) circle(0.5 mm);

\node [right] at (1.5,.5) {$\sigma$};
\node [left] at (2.5,.5) {$\sigma$};
\node [below] at (2,-2){$X$};

\draw (6,0) -- (8,1) -- (10,0);
\draw [dashed,->,thick] (8,1) -- (8,.5);
\draw [dashed,thick] (8,.5) -- (8,0);
\draw [decorate] (6,0) arc(180:360:2);

\draw [fill, color=black!90] (6,0) circle(0.5 mm);
\draw [fill, color=black!90] (8,1) circle(0.5 mm);
\draw [fill, color=black!90] (8,0) circle(0.5 mm);
\draw [fill, color=black!90] (10,0) circle(0.5 mm);

\node [right] at (8,.5) {$\sigma$};
\node [below] at (8,-2){$X$};

\end{tikzpicture}
\caption{Reduction A (pasting along $\sigma$)}
\label{fig:redA}
\end{figure}
\item[B.] 
Replace $\sigma\tau X \bar{\tau}Y$ by 
$\rho X \bar{\rho}\sigma Y$ (Example given in Figure \ref{fig:redB}).
\vspace{-2mm}
\begin{figure}[ht]
\centering
\begin{tikzpicture}
[decoration=zigzag]
\draw [thick,->] (0,0) -- (1,0);
\draw [thick] (1,0) -- (2,0);
\draw [thick,->,dashed] (0,2) -- (1,2);
\draw [thick,dashed] (1,2) -- (2,2);
\draw [thick,->] (1,3) -- (1.5,2.5);
\draw [thick] (1.5,2.5) -- (2,2);
\draw [thick,->] (0,2) -- (0.5,2.5);
\draw [thick] (.5,2.5) -- (1,3);
\draw [decorate] (0,0) --(0,2);
\draw [decorate] (2,2) --(2,0);

\draw [fill, color=black!90] (0,0) circle(0.5 mm);
\draw [fill, color=black!90] (2,0) circle(0.5 mm);
\draw [fill, color=black!90] (0,2) circle(0.5 mm);
\draw [fill, color=black!90] (2,2) circle(0.5 mm);
\draw [fill, color=black!90] (1,3) circle(0.5 mm);

\node [left] at (.5,2.5) {$\sigma$};
\node [right] at (1.5,2.5) {$\tau$};
\node [left] at (0,1) {$Y$};
\node [right] at (2,1) {$X$};
\node [below] at (1,0) {$\tau$};
\node [below] at (1,2) {$\rho$}; 
\node[left] at (0,2) {$p$};
\node[right] at (2,2) {$q$};
\node[above] at (1,3) {$r$};
\node[right] at (2,0) {$q$};
\node[left] at (0,0) {$r$};


\draw [thick,->,dashed] (5,1) -- (6,1);
\draw [thick,dashed] (6,1) -- (7,1);
\draw [thick,->] (5,3) -- (6,3);
\draw [thick] (6,3) -- (7,3);
\draw [thick,->] (6,0) -- (6.5,.5);
\draw [thick] (6.5,.5) -- (7,1);
\draw [thick,->] (6,0) -- (5.5,.5);
\draw [thick] (5.5,.5) -- (5,1);
\draw [decorate] (5,1) --(5,3);
\draw [decorate] (7,3) --(7,1);

\draw [fill, color=black!90] (5,1) circle(0.5 mm);
\draw [fill, color=black!90] (7,1) circle(0.5 mm);
\draw [fill, color=black!90] (5,3) circle(0.5 mm);
\draw [fill, color=black!90] (7,3) circle(0.5 mm);
\draw [fill, color=black!90] (6,0) circle(0.5 mm);

\node [left] at (5.5,.5) {$\sigma$};
\node [right] at (6.5,.5) {$\rho$};
\node [left] at (5,2) {$Y$};
\node [right] at (7,2) {$X$};
\node [below] at (6,1) {$\tau$};
\node [below] at (6,3) {$\rho$}; 
\node[left] at (5,3) {$p$};
\node[right] at (7,3) {$q$};
\node[below] at (6,0) {$p$};
\node[right] at (7,1) {$q$};
\node[left] at (5,1) {$r$};

\end{tikzpicture}
\caption{Reduction B (Cutting along $\rho$ followed by pasting along $\tau$). Note that the number of vertices in the equivalence class of $r$, reduces by $1$.}
\label{fig:redB}
\end{figure}
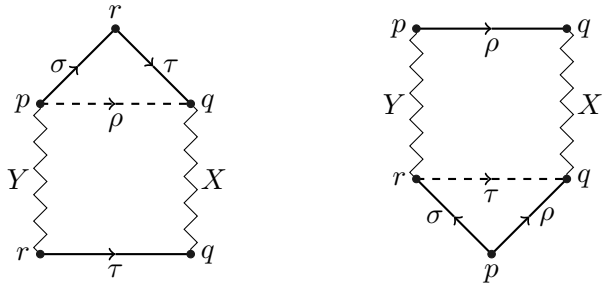
\item[C.] Replace $\sigma X \sigma Y$ by $\tau\tau Y^{*}X$, where 
$Y^{*}$ is reverse complement of $Y$ (Example given in Figure \ref{fig:redC}).
\vspace{-2mm}
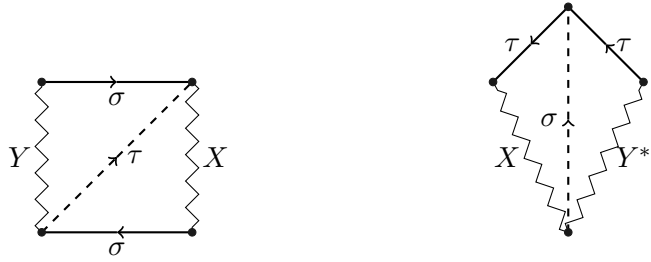
\begin{figure}[ht]
\centering
\begin{tikzpicture}
[decoration=zigzag]
\draw [thick,->] (2,0) -- (1,0);
\draw [thick] (1,0) -- (0,0);
\draw [thick,->] (0,2) -- (1,2);
\draw [thick] (1,2) -- (2,2);
\draw [thick,->,dashed] (0,0) -- (1,1);
\draw [thick,dashed] (1,1) -- (2,2);
\draw [decorate] (0,0) -- (0,2);
\draw [decorate] (2,2) -- (2,0);

\draw [fill, color=black!90] (0,0) circle(0.5 mm);
\draw [fill, color=black!90] (2,0) circle(0.5 mm);
\draw [fill, color=black!90] (0,2) circle(0.5 mm);
\draw [fill, color=black!90] (2,2) circle(0.5 mm);

\node [left] at (0,1) {$Y$};
\node [right] at (2,1) {$X$};
\node [below] at (1,0) {$\sigma$};
\node [below] at (1,2) {$\sigma$}; 
\node [right] at (1,1) {$\tau$}; 


\draw [thick,->] (8,2) -- (7.5,2.5);
\draw [thick] (7.5,2.5) -- (7,3);
\draw [thick,->] (7,3) -- (6.5,2.5);
\draw [thick] (6.5,2.5) -- (6,2);
\draw [thick,->,dashed] (7,0) -- (7,1.5);
\draw [thick,dashed] (7,1.5) -- (7,3);
\draw [decorate] (7,0) -- (8,2);
\draw [decorate] (7,0) -- (6,2);

\draw [fill, color=black!90] (7,0) circle(0.5 mm);
\draw [fill, color=black!90] (7,3) circle(0.5 mm);
\draw [fill, color=black!90] (6,2) circle(0.5 mm);
\draw [fill, color=black!90] (8,2) circle(0.5 mm);

\node [left] at (6.5,1) {$X$};
\node [right] at (7.5,1) {$Y^*$};
\node [left] at (7,1.5) {$\sigma$};
\node [left] at (6.5,2.5) {$\tau$};
\node [right] at (7.5,2.5) {$\tau$};

\end{tikzpicture}
\caption{Reduction C (Cutting along $\tau$ followed by pasting along $\rho$)}
\label{fig:redC}
\end{figure}
\item[D.] Replace $\sigma X \tau Y \bar{\sigma} U\bar{\tau}V$ by
$\rho \pi \bar{\rho}\bar{\pi} UYXV$ (Example given in Figure \ref{fig:redD}).
\vspace{-2mm}
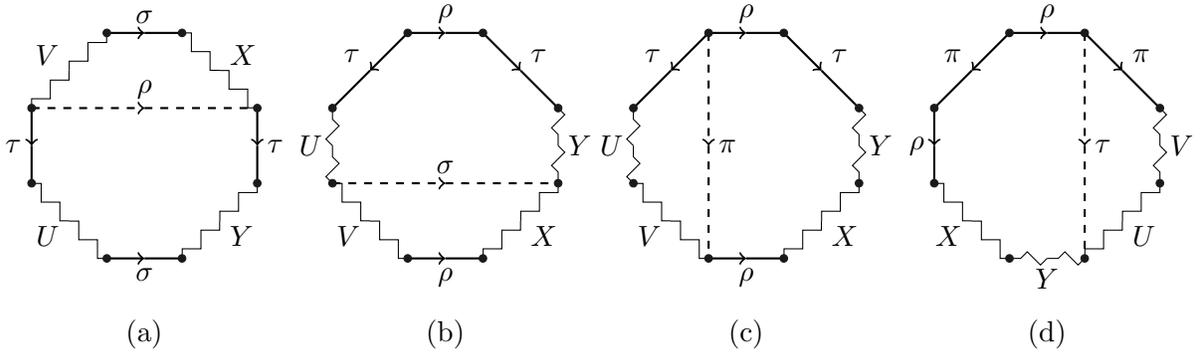
\begin{figure}[ht]
\centering
\begin{tikzpicture}
[decoration=zigzag]
\foreach \x in {0}
{
\draw [thick,->] (\x+1,3) -- (\x+1.5,3);
\draw [thick,-] (\x+1.5,3) -- (\x+2,3);
\draw [decorate] (\x+2,3) -- (\x+2.5,2.5);
\draw [decorate] (\x+2.5,2.5) -- (\x+3,2);
\draw [thick,->] (\x+3,2) -- (\x+3,1.5);
\draw [thick,-] (\x+3,1.5) -- (\x+3,1);
\draw [decorate] (\x+3,1) -- (\x+2.5,.5);
\draw [decorate] (\x+2.5,.5) -- (\x+2,0);
\draw [thick,-] (\x+2,0) -- (\x+1.5,0);
\draw [thick,<-] (\x+1.5,0) -- (\x+1,0);
\draw [decorate] (\x+1,0) -- (\x+.5,.5);
\draw [decorate] (\x+.5,.5) -- (\x,1);
\draw [thick,-] (\x,1) -- (\x,1.5);
\draw [thick,<-] (\x,1.5) -- (\x,2);
\draw [decorate] (\x,2) -- (\x+.5,2.5);
\draw [decorate] (\x+.5,2.5) -- (\x+1,3);

\draw [thick,->,dashed] (\x,2) -- (\x+1.5,2);
\draw [thick,-,dashed] (\x+1.5,2) -- (\x+3,2);

\draw [fill, color=black!90] (\x+1,3) circle(0.5 mm);
\draw [fill, color=black!90] (\x+2,3) circle(0.5 mm);
\draw [fill, color=black!90] (\x+3,2) circle(0.5 mm);
\draw [fill, color=black!90] (\x+3,1) circle(0.5 mm);
\draw [fill, color=black!90] (\x+2,0) circle(0.5 mm);
\draw [fill, color=black!90] (\x+1,0) circle(0.5 mm);
\draw [fill, color=black!90] (\x,1) circle(0.5 mm);
\draw [fill, color=black!90] (\x,2) circle(0.5 mm);

\node [above] at (\x+1.5,3) {$\sigma$};
\node [right] at (\x+2.5,2.7) {$X$};
\node [right] at (\x+3,1.5) {$\tau$};
\node [right] at (\x+2.5,.3) {$Y$};
\node [below] at (\x+1.5,0) {$\sigma$};
\node [left] at (\x+.5,.3) {$U$};
\node [left] at (\x,1.5) {$\tau$};
\node [left] at (\x+.5,2.7) {$V$};

\node [above] at (\x+1.5,2) {$\rho$};
\node  at (\x+1.5,-1) {(a)};

}

\foreach \x in {4}
{
\draw [thick,->] (\x+1,3) -- (\x+1.5,3);
\draw [thick,-] (\x+1.5,3) -- (\x+2,3);
\draw [thick,->] (\x+2,3) -- (\x+2.5,2.5);
\draw [thick,-] (\x+2.5,2.5) -- (\x+3,2);
\draw [decorate] (\x+3,2) -- (\x+3,1.5);
\draw [decorate] (\x+3,1.5) -- (\x+3,1);
\draw [decorate] (\x+3,1) -- (\x+2.5,.5);
\draw [decorate] (\x+2.5,.5) -- (\x+2,0);
\draw [thick,-] (\x+2,0) -- (\x+1.5,0);
\draw [thick,<-] (\x+1.5,0) -- (\x+1,0);
\draw [decorate] (\x+1,0) -- (\x+.5,.5);
\draw [decorate] (\x+.5,.5) -- (\x,1);
\draw [decorate] (\x,1) -- (\x,1.5);
\draw [decorate] (\x,1.5) -- (\x,2);
\draw [thick,-] (\x,2) -- (\x+.5,2.5);
\draw [thick,<-] (\x+.5,2.5) -- (\x+1,3);

\draw [thick,->,dashed] (\x,1) -- (\x+1.5,1);
\draw [thick,-,dashed] (\x+1.5,1) -- (\x+3,1);

\draw [fill, color=black!90] (\x+1,3) circle(0.5 mm);
\draw [fill, color=black!90] (\x+2,3) circle(0.5 mm);
\draw [fill, color=black!90] (\x+3,2) circle(0.5 mm);
\draw [fill, color=black!90] (\x+3,1) circle(0.5 mm);
\draw [fill, color=black!90] (\x+2,0) circle(0.5 mm);
\draw [fill, color=black!90] (\x+1,0) circle(0.5 mm);
\draw [fill, color=black!90] (\x,1) circle(0.5 mm);
\draw [fill, color=black!90] (\x,2) circle(0.5 mm);

\node [above] at (\x+1.5,3) {$\rho$};
\node [right] at (\x+2.5,2.7) {$\tau$};
\node [right] at (\x+3,1.5) {$Y$};
\node [right] at (\x+2.5,.3) {$X$};
\node [below] at (\x+1.5,0) {$\rho$};
\node [left] at (\x+.5,.3) {$V$};
\node [left] at (\x,1.5) {$U$};
\node [left] at (\x+.5,2.7) {$\tau$};

\node [above] at (\x+1.5,1) {$\sigma$};
\node  at (\x+1.5,-1) {(b)};
}

\foreach \x in {8}
{
\draw [thick,->] (\x+1,3) -- (\x+1.5,3);
\draw [thick,-] (\x+1.5,3) -- (\x+2,3);
\draw [thick,->] (\x+2,3) -- (\x+2.5,2.5);
\draw [thick,-] (\x+2.5,2.5) -- (\x+3,2);
\draw [decorate] (\x+3,2) -- (\x+3,1.5);
\draw [decorate] (\x+3,1.5) -- (\x+3,1);
\draw [decorate] (\x+3,1) -- (\x+2.5,.5);
\draw [decorate] (\x+2.5,.5) -- (\x+2,0);
\draw [thick,-] (\x+2,0) -- (\x+1.5,0);
\draw [thick,<-] (\x+1.5,0) -- (\x+1,0);
\draw [decorate] (\x+1,0) -- (\x+.5,.5);
\draw [decorate] (\x+.5,.5) -- (\x,1);
\draw [decorate] (\x,1) -- (\x,1.5);
\draw [decorate] (\x,1.5) -- (\x,2);
\draw [thick,-] (\x,2) -- (\x+.5,2.5);
\draw [thick,<-] (\x+.5,2.5) -- (\x+1,3);

\draw [thick,->,dashed] (\x+1,3) -- (\x+1,1.5);
\draw [thick,-,dashed] (\x+1,1.5) -- (\x+1,0);

\draw [fill, color=black!90] (\x+1,3) circle(0.5 mm);
\draw [fill, color=black!90] (\x+2,3) circle(0.5 mm);
\draw [fill, color=black!90] (\x+3,2) circle(0.5 mm);
\draw [fill, color=black!90] (\x+3,1) circle(0.5 mm);
\draw [fill, color=black!90] (\x+2,0) circle(0.5 mm);
\draw [fill, color=black!90] (\x+1,0) circle(0.5 mm);
\draw [fill, color=black!90] (\x,1) circle(0.5 mm);
\draw [fill, color=black!90] (\x,2) circle(0.5 mm);

\node [above] at (\x+1.5,3) {$\rho$};
\node [right] at (\x+2.5,2.7) {$\tau$};
\node [right] at (\x+3,1.5) {$Y$};
\node [right] at (\x+2.5,.3) {$X$};
\node [below] at (\x+1.5,0) {$\rho$};
\node [left] at (\x+.5,.3) {$V$};
\node [left] at (\x,1.5) {$U$};
\node [left] at (\x+.5,2.7) {$\tau$};

\node [right] at (\x+1,1.5) {$\pi$};
\node  at (\x+1.5,-1) {(c)};
}

\foreach \x in {12}
{
\draw [thick,->] (\x+1,3) -- (\x+1.5,3);
\draw [thick,-] (\x+1.5,3) -- (\x+2,3);
\draw [thick,->] (\x+2,3) -- (\x+2.5,2.5);
\draw [thick,-] (\x+2.5,2.5) -- (\x+3,2);
\draw [decorate] (\x+3,2) -- (\x+3,1.5);
\draw [decorate] (\x+3,1.5) -- (\x+3,1);
\draw [decorate] (\x+3,1) -- (\x+2.5,.5);
\draw [decorate] (\x+2.5,.5) -- (\x+2,0);
\draw [decorate] (\x+2,0) -- (\x+1.5,0);
\draw [decorate] (\x+1.5,0) -- (\x+1,0);
\draw [decorate] (\x+1,0) -- (\x+.5,.5);
\draw [decorate] (\x+.5,.5) -- (\x,1);
\draw [thick,-] (\x,1) -- (\x,1.5);
\draw [thick,<-] (\x,1.5) -- (\x,2);
\draw [thick,-] (\x,2) -- (\x+.5,2.5);
\draw [thick,<-] (\x+.5,2.5) -- (\x+1,3);

\draw [thick,->,dashed] (\x+2,3) -- (\x+2,1.5);
\draw [thick,-,dashed] (\x+2,1.5) -- (\x+2,0);

\draw [fill, color=black!90] (\x+1,3) circle(0.5 mm);
\draw [fill, color=black!90] (\x+2,3) circle(0.5 mm);
\draw [fill, color=black!90] (\x+3,2) circle(0.5 mm);
\draw [fill, color=black!90] (\x+3,1) circle(0.5 mm);
\draw [fill, color=black!90] (\x+2,0) circle(0.5 mm);
\draw [fill, color=black!90] (\x+1,0) circle(0.5 mm);
\draw [fill, color=black!90] (\x,1) circle(0.5 mm);
\draw [fill, color=black!90] (\x,2) circle(0.5 mm);

\node [above] at (\x+1.5,3) {$\rho$};
\node [right] at (\x+2.5,2.7) {$\pi$};
\node [right] at (\x+3,1.5) {$V$};
\node [right] at (\x+2.5,.3) {$U$};
\node [below] at (\x+1.5,0) {$Y$};
\node [left] at (\x+.5,.3) {$X$};
\node [left] at (\x,1.5) {$\rho$};
\node [left] at (\x+.5,2.7) {$\pi$};

\node [right] at (\x+2,1.5) {$\tau$};
\node  at (\x+1.5,-1) {(d)};
}

\end{tikzpicture}
\caption{Reduction D (a) Cutting along $\rho$. (b) Pasting along $\sigma$. (c) Cutting along $\pi$. (d) Pasting along $\tau$. }
\label{fig:redD}
\end{figure}
\item[E.] Replace 
$\sigma_1 \sigma_1 X \sigma_2 \sigma_3 \bar{\sigma_2}\bar{\sigma_3} Y$ by $\tau_1 \tau_1 \tau_2 \tau_2 \tau_3 \tau_3 XY$ (Example given in Figure \ref{fig:redE}).
\vspace{-2mm}
\begin{figure}[ht]
\centering
\begin{tikzpicture}
[decoration=zigzag]
\foreach \x in {0}
{
\draw [thick,->] (\x+1,3) -- (\x+1.5,3);
\draw [thick,-] (\x+1.5,3) -- (\x+2,3);
\draw [thick, ->] (\x+2,3) -- (\x+2.5,2.5);
\draw [thick,-] (\x+2.5,2.5) -- (\x+3,2);
\draw [decorate] (\x+3,2) -- (\x+3,1.5);
\draw [decorate] (\x+3,1.5) -- (\x+3,1);
\draw [thick,->] (\x+3,1) -- (\x+2.5,.5);
\draw [thick,-] (\x+2.5,.5) -- (\x+2,0);
\draw [thick,->] (\x+2,0) -- (\x+1.5,0);
\draw [thick,-] (\x+1.5,0) -- (\x+1,0);
\draw [thick,->] (\x+1,0) -- (\x+.5,.5);
\draw [thick,-] (\x+.5,.5) -- (\x,1);
\draw [thick,->] (\x,1) -- (\x,1.5);
\draw [thick,-] (\x,1.5) -- (\x,2);
\draw [decorate] (\x,2) -- (\x+.5,2.5);
\draw [decorate] (\x+.5,2.5) -- (\x+1,3);

\draw [thick,->,dashed] (\x+1,0) -- (\x+1.5,1.5);
\draw [thick,-,dashed] (\x+1.5,1.5) -- (\x+2,3);

\draw [fill, color=black!90] (\x+1,3) circle(0.5 mm);
\draw [fill, color=black!90] (\x+2,3) circle(0.5 mm);
\draw [fill, color=black!90] (\x+3,2) circle(0.5 mm);
\draw [fill, color=black!90] (\x+3,1) circle(0.5 mm);
\draw [fill, color=black!90] (\x+2,0) circle(0.5 mm);
\draw [fill, color=black!90] (\x+1,0) circle(0.5 mm);
\draw [fill, color=black!90] (\x,1) circle(0.5 mm);
\draw [fill, color=black!90] (\x,2) circle(0.5 mm);

\node [above] at (\x+1.5,3) {$\sigma_1$};
\node [right] at (\x+2.5,2.7) {$\sigma_2$};
\node [right] at (\x+3,1.5) {$X$};
\node [right] at (\x+2.5,.3) {$\sigma_2$};
\node [below] at (\x+1.5,0) {$\sigma_2$};
\node [left] at (\x+.5,.3) {$\sigma_3$};
\node [left] at (\x,1.5) {$\sigma_3$};
\node [left] at (\x+.5,2.7) {$Y$};

\node [above] at (\x+1.3,1.5) {$\rho$};
\node  at (\x+1.5,-1) {(a)};

}

\foreach \x in {5}
{
\draw [thick,->] (\x+1,3) -- (\x+1.5,3);
\draw [thick,-] (\x+1.5,3) -- (\x+2,3);
\draw [thick, ->] (\x+2,3) -- (\x+2.5,2.5);
\draw [thick,-] (\x+2.5,2.5) -- (\x+3,2);
\draw [thick,->] (\x+3,2) -- (\x+3,1.5);
\draw [thick,-] (\x+3,1.5) -- (\x+3,1);
\draw [decorate] (\x+3,1) -- (\x+2.5,.5);
\draw [decorate] (\x+2.5,.5) -- (\x+2,0);
\draw [thick,->] (\x+2,0) -- (\x+1.5,0);
\draw [thick,-] (\x+1.5,0) -- (\x+1,0);
\draw [thick,->] (\x+1,0) -- (\x+.5,.5);
\draw [thick,-] (\x+.5,.5) -- (\x,1);
\draw [thick,->] (\x,1) -- (\x,1.5);
\draw [thick,-] (\x,1.5) -- (\x,2);
\draw [decorate] (\x,2) -- (\x+.5,2.5);
\draw [decorate] (\x+.5,2.5) -- (\x+1,3);

\draw [thick,->,dashed] (\x+3,1) -- (\x+1.5,1.5);
\draw [thick,-,dashed] (\x+1.5,1.5) -- (\x,2);

\draw [fill, color=black!90] (\x+1,3) circle(0.5 mm);
\draw [fill, color=black!90] (\x+2,3) circle(0.5 mm);
\draw [fill, color=black!90] (\x+3,2) circle(0.5 mm);
\draw [fill, color=black!90] (\x+3,1) circle(0.5 mm);
\draw [fill, color=black!90] (\x+2,0) circle(0.5 mm);
\draw [fill, color=black!90] (\x+1,0) circle(0.5 mm);
\draw [fill, color=black!90] (\x,1) circle(0.5 mm);
\draw [fill, color=black!90] (\x,2) circle(0.5 mm);

\node [above] at (\x+1.5,3) {$\sigma_2$};
\node [right] at (\x+2.5,2.7) {$\sigma_3$};
\node [right] at (\x+3,1.5) {$\rho$};
\node [right] at (\x+2.5,.3) {$Y^*$};
\node [below] at (\x+1.5,0) {$\sigma_3$};
\node [left] at (\x+.5,.3) {$\sigma_2$};
\node [left] at (\x,1.5) {$\rho$};
\node [left] at (\x+.5,2.7) {$X$};

\node [above] at (\x+1.5,1.5) {$\sigma_1$};
\node  at (\x+1.5,-1) {(b)};

}

\foreach \x in {10}
{
\draw [thick,->] (\x+1,3) -- (\x+1.5,3);
\draw [thick,-] (\x+1.5,3) -- (\x+2,3);
\draw [thick, ->] (\x+2,3) -- (\x+2.5,2.5);
\draw [thick,-] (\x+2.5,2.5) -- (\x+3,2);
\draw [thick,->] (\x+3,2) -- (\x+3,1.5);
\draw [thick,-] (\x+3,1.5) -- (\x+3,1);
\draw [thick,->] (\x+3,1) -- (\x+2.5,.5);
\draw [thick,-] (\x+2.5,.5) -- (\x+2,0);
\draw [thick,->] (\x+2,0) -- (\x+1.5,0);
\draw [thick,-] (\x+1.5,0) -- (\x+1,0);
\draw [thick,->] (\x+1,0) -- (\x+.5,.5);
\draw [thick,-] (\x+.5,.5) -- (\x,1);
\draw [decorate] (\x,1) -- (\x,1.5);
\draw [decorate] (\x,1.5) -- (\x,2);
\draw [decorate] (\x,2) -- (\x+.5,2.5);
\draw [decorate] (\x+.5,2.5) -- (\x+1,3);

\draw [fill, color=black!90] (\x+1,3) circle(0.5 mm);
\draw [fill, color=black!90] (\x+2,3) circle(0.5 mm);
\draw [fill, color=black!90] (\x+3,2) circle(0.5 mm);
\draw [fill, color=black!90] (\x+3,1) circle(0.5 mm);
\draw [fill, color=black!90] (\x+2,0) circle(0.5 mm);
\draw [fill, color=black!90] (\x+1,0) circle(0.5 mm);
\draw [fill, color=black!90] (\x,1) circle(0.5 mm);
\draw [fill, color=black!90] (\x,2) circle(0.5 mm);

\node [above] at (\x+1.5,3) {$\tau_1$};
\node [right] at (\x+2.5,2.7) {$\tau_1$};
\node [right] at (\x+3,1.5) {$\tau_2$};
\node [right] at (\x+2.5,.3) {$\tau_2$};
\node [below] at (\x+1.5,0) {$\tau_3$};
\node [left] at (\x+.5,.3) {$\tau_3$};
\node [left] at (\x,1.5) {$X$};
\node [left] at (\x+.5,2.7) {$Y$};

\node  at (\x+1.5,-1) {(c)};

}

\end{tikzpicture}
\caption{Reduction E (a) Cutting along $\rho$. (b) Pasting along $\sigma_1$. (c) Obtained from Figure \ref{fig:redE}(b) by applying Reduction C thrice.}
\label{fig:redE}
\end{figure}
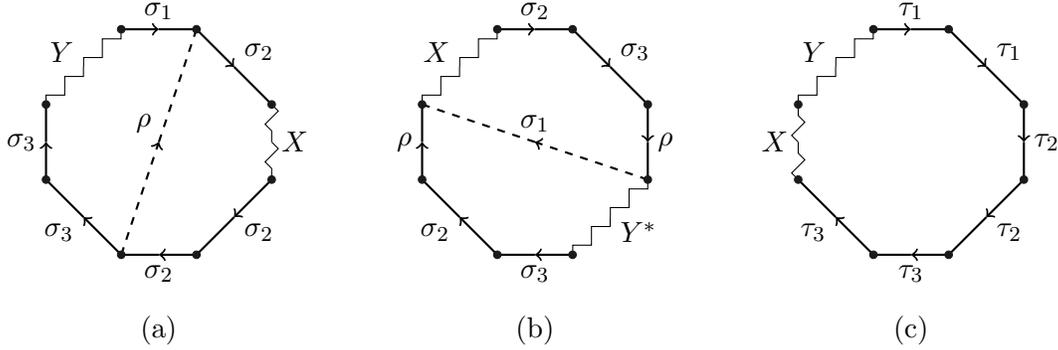
\item[F.] Replace $\sigma \sigma \tau \tau X$
by $\sigma \rho \bar{\sigma}  \rho X$ (Example given in Figure \ref{fig:redF}).
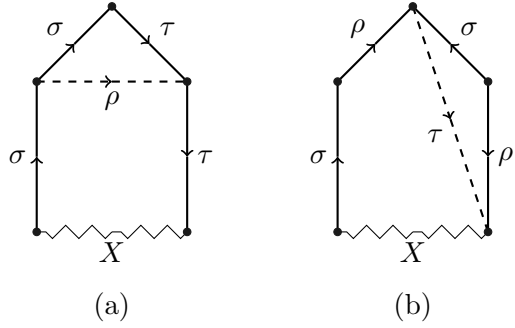
\begin{figure}[ht]
\centering
\begin{tikzpicture}
[decoration=zigzag]
\foreach \x in {0}
{
\draw [thick,->] (\x,0) -- (\x,1);
\draw [thick,-] (\x,1) -- (\x,2);
\draw [thick, ->] (\x,2) -- (\x+.5,2.5);
\draw [thick,-] (\x+.5,2.5) -- (\x+1,3);
\draw [thick,->] (\x+1,3) -- (\x+1.5,2.5);
\draw [thick,-] (\x+1.5,2.5) -- (\x+2,2);
\draw [thick,->] (\x+2,2) -- (\x+2,1);
\draw [thick,-] (\x+2,1) -- (\x+2,0);
\draw [decorate] (\x+2,0) -- (\x+1,0);
\draw [decorate] (\x+1,0) -- (\x,0);

\draw [thick,->,dashed] (\x,2) -- (\x+1,2);
\draw [thick,-,dashed] (\x+1,2) -- (\x+2,2);

\draw [fill, color=black!90] (\x,0) circle(0.5 mm);
\draw [fill, color=black!90] (\x,2) circle(0.5 mm);
\draw [fill, color=black!90] (\x+1,3) circle(0.5 mm);
\draw [fill, color=black!90] (\x+2,2) circle(0.5 mm);
\draw [fill, color=black!90] (\x+2,0) circle(0.5 mm);

\node [left] at (\x,1) {$\sigma$};
\node [left] at (\x+.5,2.7) {$\sigma$};
\node [right] at (\x+1.5,2.7) {$\tau$};
\node [right] at (\x+2,1) {$\tau$};
\node [below] at (\x+1,0) {$X$};

\node [below] at (\x+1,2) {$\rho$};
\node  at (\x+1,-1) {(a)};
}

\foreach \x in {4}
{
\draw [thick,->] (\x,0) -- (\x,1);
\draw [thick,-] (\x,1) -- (\x,2);
\draw [thick, ->] (\x,2) -- (\x+.5,2.5);
\draw [thick,-] (\x+.5,2.5) -- (\x+1,3);
\draw [thick,-] (\x+1,3) -- (\x+1.5,2.5);
\draw [thick,<-] (\x+1.5,2.5) -- (\x+2,2);
\draw [thick,->] (\x+2,2) -- (\x+2,1);
\draw [thick,-] (\x+2,1) -- (\x+2,0);
\draw [decorate] (\x+2,0) -- (\x+1,0);
\draw [decorate] (\x+1,0) -- (\x,0);

\draw [thick,->,dashed] (\x+1,3) -- (\x+1.5,1.5);
\draw [thick,-,dashed] (\x+1.5,1.5) -- (\x+2,0);

\draw [fill, color=black!90] (\x,0) circle(0.5 mm);
\draw [fill, color=black!90] (\x,2) circle(0.5 mm);
\draw [fill, color=black!90] (\x+1,3) circle(0.5 mm);
\draw [fill, color=black!90] (\x+2,2) circle(0.5 mm);
\draw [fill, color=black!90] (\x+2,0) circle(0.5 mm);

\node [left] at (\x,1) {$\sigma$};
\node [left] at (\x+.5,2.7) {$\rho$};
\node [right] at (\x+1.5,2.7) {$\sigma$};
\node [right] at (\x+2,1) {$\rho$};
\node [below] at (\x+1,0) {$X$};

\node [below] at (\x+1.3,1.5) {$\tau$};
\node  at (\x+1,-1) {(b)};
}

\end{tikzpicture}
\caption{Reduction F (a) Cutting along $\rho$. (b) Pasting along $\tau$.}
\label{fig:redF}
\end{figure}
\end{enumerate}
The procedure is to 
\begin{enumerate}
\item Use reductions A,B,C several times to ensure that all the 
sides of the polygonal schema have a common endpoint. 
\item 
\begin{enumerate}
\item Orientable case: Use transform D repeatedly to bring the polygon 
in normal form. 
\item Non-orientable case:
\begin{itemize}
\item[-] Use reductions C,D to convert the schema
into a form where the orientable symbols are clustered and non-orientable symbols are paired
\item[-] Use reduction E repeatedly (in the forward direction) to eliminate all 
orientable symbols.
\item[-] Use Reduction E in the \emph{reverse} direction repeatedly 
to eliminate all but at most one non-orientable symbol.
\item[-] Use Reduction F, if necessary, to ensure that there is at most one 
non-orientable symbol.
\end{itemize}
\end{enumerate}
\end{enumerate}
Possibly, the only step requiring any explanation is the last one. We apply
Reduction E in reverse with $X$ as the empty string to replace three
non-orientable symbols by two orientable ones forming a cluster of $4$ and
a single non-orientable one which forms a pair. The way we apply the reduction,
ensures that both the orientable and the non-orientable parts are contiguous.

Fianlly we will be left with a string in one of the first two normal forms 
or a string of the form $\sigma \sigma \tau \tau X$ (where $X$ is an orientable
schema in normal form) in which case Reduction F is applicable. 

To see that the above procedure can be carried out in \L\ it suffices to prove
that each of the above reductions can be carried out in \L, the number of 
reductions is bounded by a constant and we can decide in \L\ when to carry out 
a reduction.

The Vegter-Yap paper does careful book-keeping in order to ensure that the
number of operations in Step 1 is linear in the original genus. 
We can alternatively, follow the brute force approach and keep on applying
Reductions A,B,C while the sides of the polygon do not have a common
end-point. This will require at most linear number of applications of 
the first two reductions.

Observe that for the orientable case, each application of reduction D
reduces the number of unclustered symbols by two. Thus we are done in 
$O(m)$ applications of this reduction. Similarly, each application of
reduction C reduces the number of unpaired non-orientable symbols by one
and as before every application of reduction D reduces the number of 
unclustered orientable symbols by two. So in $O(m)$ steps all the
orientable symbols are clustered and the non-orientable symbols are
paired. Now every application of reduction E in the forward direction
gets rid of two orientable symbols so in $O(m)$ steps all the orientable
symbols are removed. Finally $O(m)$ applications of reduction E in reverse
lead to removal of all but one non-orientable symbols.

To see that each of the steps is in \L\ observe that each of the steps
involves one or more of the following operations:
\begin{itemize}
\item[-] find a path through the interior of the polygon between two points 
on its boundary 
\item[-] cut along a path
\item[-] paste two paired sides of (a cut) polygon together
\end{itemize}
We know how to do the second operation in \L\ while the third, being the 
reverse of the second one is even easier, since we just have to identify
corresponding spurious vertices and then excise them out of the corresponding
edge. The first operation is just an undirected reachability question in the 
graph (minus its boundary) hence is in \L\ by Reingold's Theorem. 

Finally, a determination of when to apply a particular reduction is easily
seen to be in \L\ for all but, possibly, reduction D. In this case,
for an orientable symbol $\sigma$ separated from its mate $\bar{\sigma}$
on both sides, sequentially test for each other symbol $\tau$ 
if it lies in one
of the two stretches that $\sigma$ and its mate divide the schema into, while
its mate $\bar{\tau}$ lies in the other. Having found the first such $\tau$
suffices to enable a use of the reduction.

Thus, using the above argument and Lemma~\ref{lemma:polySchema} 
we have sketched the proof of the following theorem:
\begin{theorem} \label{theorem:normal}
Given a combinatorial embedding of constant genus, say $g$ (which is 
positive or otherwise), for a graph $G$,
in logspace we can find a polygonal schema for the graph in normal form. 
of genus $O(|g|)$ in magnitude, and also the corresponding combinatorial
embedding.
\end{theorem}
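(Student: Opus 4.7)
The plan is to invoke Lemma~\ref{lemma:polySchema} first, producing from the input combinatorial embedding of $G$ some polygonal schema (not yet in normal form) of $G$ with its combinatorial embedding, all in logspace. Since $g$ is constant, the resulting schema has $m = O(g) = O(1)$ sides. On this schema I would then run a logspace implementation of the Brahana--Dehn--Heegaard classification as algorithmically described by Vegter--Yap, structured into two phases. Phase~1: repeatedly apply reductions A, B, C until all sides of the schema share a common endpoint. Phase~2: in the orientable case, iterate reduction D, each application strictly reducing the number of unclustered orientable symbols by two, terminating in normal form (\ref{equation:orientSchema}) or (\ref{equation:orientSchema2}); in the non-orientable case, use C and D first to cluster the orientable symbols and pair the non-orientable ones, then apply E in the forward direction to eliminate orientable clusters in the presence of a non-orientable pair, then E in reverse to collapse triples of non-orientable pairs into an orientable cluster plus a single non-orientable pair, and finally F, if necessary, to reach the normal form $\sigma\rho\bar{\sigma}\rho X$.

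Each of reductions A through F is a composition of a constant number of ``cut along a path'' and ``paste matched sides'' operations performed on the polygonal schema together with its combinatorial embedding. Cutting along a path, with the spurious-edge bookkeeping of Figure~\ref{fig:cut} and the label-splitting rules of Figure~\ref{fig:cut_operation_2}, is logspace computable; pasting two matched sides is essentially its inverse, realized by identifying corresponding spurious vertices and excising the intermediate spurious edges. The only non-trivial subroutine invoked is finding a path through the interior of the polygon between two designated boundary points, which is an instance of undirected $st$-reachability on the current graph (minus its boundary) and hence in \L\ by Reingold's theorem. Deciding which reduction applies at each step is a constant-complexity scan of the boundary word; to trigger D, for instance, one searches for an orientable symbol $\sigma$ whose two copies are separated on both sides by some other symbol $\tau$ and its partner.

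The main point to verify is that the overall procedure uses only a constant number of reductions, so that the composition remains in logspace. Phase~1 terminates in $O(m)$ steps because each of A, B, C strictly decreases a simple potential (the number of symbols, the number of distinct endpoint equivalence classes, and the number of still-``uncrossed'' symbol pairs, respectively). Phase~2 terminates in $O(m)$ steps because D, E (forward), E (reverse), and F each strictly decrease a well-defined structural count (unclustered orientable symbols, remaining orientable clusters, remaining non-orientable pairs, unpaired non-orientable symbols). Since $m = O(g) = O(1)$, the total number of reductions is $O(1)$, and the composition of constantly many logspace transductions is again in logspace. The genus bound $O(|g|)$ follows because each individual reduction changes the genus of the associated surface by at most a constant additive amount. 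The main obstacle in executing this plan is the careful bookkeeping needed to maintain the combinatorial embedding (the rotation system at split vertices and the recursive sub-labels $L'_{C}, L''_{C}, L'''_{C}$) across cuts and pastes; once this is pinned down, logspace composition with Lemma~\ref{lemma:polySchema} yields the claimed normal-form polygonal schema together with its embedding.
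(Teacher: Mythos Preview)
Your proposal is correct and follows essentially the same route as the paper: invoke Lemma~\ref{lemma:polySchema}, then run the Vegter--Yap rendition of the Brahana--Dehn--Heegaard normalization via reductions A--F, arguing that each reduction is a constant-size composition of logspace cut/paste/path-finding primitives (the latter via Reingold), and that only $O(m)=O(1)$ reductions are ever applied. The paper's own argument is structured identically, with the same case split (orientable: iterate D; non-orientable: C,D then E forward, E reverse, then F) and the same justification that deciding when to apply D is a simple scan of the boundary word.
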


Let {\kGonBi} be the class of constant genus, bipartite graphs along with an embedding given on the polygonal schema in normal form of the surface in which the graph has an embedding. Moreover, for every graph in this class, no edge has both its end points on the boundary of the polygon.

At this point, there are no vertices lying on the boundary of the 
polygonal schema, only edges crossing it. It is easy to see that for
each such edge $e = (u,v)$ which has two halves lying on segments of the
polygon, if we introduce internal vertices $u'=v', v''$ on the edge (converting
it to a path $u,u' = v', v'', v$) so that $u',v'$ lie on the boundary of the 
polygon on the sides nearer to $u,v$ respectively, then, because the path has 
odd length the number of perfect matchings in the modified graph is preserved.

Thus we have proved that:
\begin{corollary} \label{cor:PS_reduction}
Given the combinatorial embedding of a graph of constant genus,
there is an logspace reduction, which preserves perfect matchings, to a graph
in the class {\kGonBi}.
\end{corollary}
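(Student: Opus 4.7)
The plan is to compose the two main structural reductions already set up in the paper with a simple local edge-subdivision step that eliminates edges having both endpoints on the polygon boundary. First, applying Lemma \ref{lemma:polySchema} to the input combinatorial embedding produces, in logspace, a polygonal schema for the graph $G$. Next, applying Theorem \ref{theorem:normal} normalizes that schema, again in logspace, yielding a combinatorial embedding on a polygonal schema in normal form of a surface of genus $O(g)$. Since logspace transducers compose and both reductions only rearrange where the polygon boundary cuts through $G$ (the cut/paste operations are along paths through the polygon interior and do not alter incidences between original vertices and edges), matchings and bipartiteness of the underlying graph are preserved.

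The remaining issue is the extra requirement of \kGonBi, namely that no edge have both endpoints on the boundary. An edge $e = (u,v)$ that crosses a boundary side of the schema appears with two halves on paired sides. Following the construction given just before the corollary statement, we subdivide $e$ into a path $u \to u' \to v'' \to v$ of length $3$, where $u'$ and $v'$ lie on the two paired sides nearer $u$ and $v$ respectively (and are identified by the gluing, so $u' = v'$ as a vertex) and $v''$ lies strictly inside the polygon on the segment from $v'$ to $v$. Since the subdivision path has odd length, every perfect matching $M'$ of the new graph either contains both $(u,u')$ and $(v'',v)$ (corresponding to $e \in M$) or contains neither (in which case $(u',v'')$ is forced into $M'$, corresponding to $e \notin M$); this yields a bijection between perfect matchings of the original and subdivided graphs. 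The length-$3$ choice also preserves bipartiteness, because the parity of the distance between $u$ and $v$ along the replacement path is unchanged from the original edge. After the subdivision, the only vertices on the boundary are the inserted $u'{=}v'$ vertices, and every edge incident to such a vertex has its other endpoint strictly inside the polygon, as required for \kGonBi.

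Each step is straightforwardly logspace computable: the first two by the cited results, and the subdivision by a local, per-edge inspection of the embedding to detect boundary crossings and write out the three replacement edges. The only subtle design choice is the length of the subdivision path: length $1$ leaves boundary-to-boundary edges in place, any even-length subdivision destroys the matching bijection (the two endpoints can no longer both be ``used'' or both be ``unused'' consistently with internal vertices being matched), so length $3$ is essentially forced. Once this is fixed, the corollary reduces to bookkeeping on top of Lemma \ref{lemma:polySchema} and Theorem \ref{theorem:normal}, and the main (very mild) obstacle is simply verifying the matching-preservation argument carefully for the subdivision step.
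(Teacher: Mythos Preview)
Your proposal is correct and follows essentially the same route as the paper: invoke Lemma~\ref{lemma:polySchema} and Theorem~\ref{theorem:normal} to obtain a normal-form polygonal schema (the paper actually folds the first into the second), then subdivide each boundary-crossing edge into a length-$3$ path $u,u'{=}v',v'',v$ so that the sole boundary vertex on each such edge is the new $u'{=}v'$. Your matching-bijection justification for the subdivision is a slightly more explicit version of the paper's one-line remark that an odd-length replacement path preserves the number of perfect matchings.
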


\subsubsection{From Polygonal Schema in normal form to a Grid}

\begin{lemma} \label{lemma:grid_reduction}
If $G$ is an orientable graph in {\kGonBi}, then one can get a logspace, matching-preserving reduction form $G$ to a graph $H \in \GG$
\end{lemma}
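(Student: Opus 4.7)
\medskip

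\noindent\textbf{Proof proposal.} The plan is to realize the combinatorial embedding of $G$ on the normal-form $4g$-gon as an actual drawing on a sufficiently fine grid laid inside that polygon, and then pass to $H$ by identifying paired boundary segments of the grid according to the schema $\sigma_1\tau_1\bar\sigma_1\bar\tau_1\cdots\sigma_g\tau_g\bar\sigma_g\bar\tau_g$. Concretely, first treat the interior of the polygon as a planar region with the boundary vertices of $G$ placed in the prescribed cyclic order around the outer face. Since the genus is constant and $G\in\kGonBi$, this interior picture is a planar graph with a fixed outer face of constant combinatorial complexity, and known logspace grid-drawing routines (as used in \cite{DKR08} and its predecessors for planar matching) produce, in logspace, a rectilinear embedding of this planar piece on an $N\times N$ grid for some $N=\text{poly}(n)$: vertices of $G$ go to distinct grid points, and edges of $G$ are routed as pairwise internally disjoint grid paths.

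Once this grid drawing is in hand, I would refine/scale the grid by a bounded factor and re-route where necessary so that three simultaneous conditions hold: (i) for every pair of identified sides $\sigma_i, \bar\sigma_i$ (resp.\ $\tau_i,\bar\tau_i$), the two boundary arcs of the grid that they correspond to contain the same number of grid vertices, in the order dictated by the polygonal-schema pairing from the definition of \kGonBi; (ii) every replacement path (the grid path used in place of an original edge of $G$) has \emph{odd} length; and (iii) any two grid vertices that get identified lie in the same color class of the natural bipartition of $\mathbb{Z}^2$. Conditions (i) and (iii) are achieved by choosing the spacing between consecutive boundary vertices on each side to be an appropriate even multiple of the unit step, and by adding a single ``padding'' grid edge on one arc when a parity mismatch remains; condition (ii) is arranged by optionally lengthening each replacement path with a short detour of even length. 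The final graph $H$ is then obtained by gluing the paired boundary arcs vertex-by-vertex; by construction it sits on the genus-$g$ grid surface, is bipartite, and lies in \GG.

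The matching-preserving statement follows from the standard odd-path substitution: for a bipartite graph, replacing an edge $e=(u,v)$ by a path of odd length $2k+1$ with $2k$ fresh internal vertices produces a graph whose perfect matchings are in explicit bijection with those of the original (an original matching using $e$ corresponds to the $(k{+}1)$-edge ``odd-indexed'' matching of the path, and one not using $e$ to the $k$-edge ``even-indexed'' matching, with all internal vertices of other replacement paths being forced). Since the reconstruction of $M$ from $M'$ is purely local to each replacement path, it is computable in logspace. The main obstacle I anticipate is carrying out all the boundary alignment, parity fixing, and odd-length padding \emph{simultaneously} in logspace with a single global scaling choice, rather than via iterative local fixes; this is the place where one has to be careful that condition (i) for each of the $2g$ boundary pairs, condition (iii) for each identified vertex, and condition (ii) for each of the polynomially many replacement paths can all be met by one uniform polynomially-bounded grid size. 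Once that uniform parameter is chosen, every remaining step—placing vertices, routing edges, performing the identifications, and inverting the reduction on a matching—is a local, logspace operation, giving the desired reduction to $\GG$.
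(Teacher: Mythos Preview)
Your overall strategy---draw the polygon interior on a planar grid in logspace, replace each edge by an odd grid path, then glue the paired boundary arcs---is the same as the paper's. The paper, however, does not use a generic grid drawing followed by ad hoc rescaling and padding. It invokes the specific spanning-tree-based grid embedding of \cite{ABCDR09}, in which the $y$-coordinate of a vertex equals its height $h(\cdot)$ in a rooted spanning tree $T$ of the planar picture (with root the degree-$2$ vertex $s$); boundary vertices are leaves of $T$ by the \kGonBi\ condition. Then every horizontal grid edge is subdivided once, forcing all horizontal distances to be even. Odd length of each replacement path is then read off directly from tree heights: for a tree edge the vertical distance is $1$; for a non-tree edge $(u,v)$ bipartiteness of $G$ forces $h(u)\not\equiv h(v)\pmod 2$. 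This replaces your ``optional even-length detour'' by a structural argument.

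The place where your proposal has a genuine gap is condition~(iii). You treat the colour-class parity of identified boundary vertices as something to be \emph{repaired} by ``adding a single padding grid edge on one arc when a parity mismatch remains.'' But once condition~(ii) holds, the $\mathbb{Z}^2$-colour of every image vertex is determined (within each connected piece) by the bipartition class of the original vertex; you cannot shift one boundary vertex by a single unit without destroying~(ii) for every edge incident to it. The paper's argument is the correct one: if $u,v$ are the two copies of an identified boundary vertex, then $h(u)\equiv h(v)\pmod 2$ because otherwise the two tree paths $s\!\to\!u$ and $s\!\to\!v$ would concatenate (after identification) to an odd closed walk in the bipartite $G$. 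So (iii) is a \emph{consequence} of bipartiteness and~(ii), not an independent constraint to be padded into place. You should replace the padding step by this parity-from-bipartiteness argument; once you do, the paper then extends each such leaf by a vertical path to the outer border of a slightly larger grid $H$ (the two drop-paths for an identified pair have even total length, again by the parity just established), which is how it gets the segment structure and the ``no edges along the boundary'' requirement of \GG---two points your proposal does not yet address.
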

\begin{proof}
We start with a graph $G \in \kGonBi$ and construct a graph $H \in \GG$ such that the number of perfect matchings in $G$ and $H$ are the same.
 
We can assume that the maximum degree of $G$ is $3$  and there exists a vertex $s$ of degree $2$ \cite{KMV08}. Think of $G$ as a planar graph. Reduce $G$ to a grid graph $G'$ using \cite{ABCDR09}. It follows from the reduction that faces are preserved (modulo subdivision of edges). Let $T$ be the spanning tree of $G$ constructed by the algorithm that would be embedded on the course grid and let $T'$ be the tree corresponding to $T$ in $G'$. Every vertex (say $v$) on boundary of the polygon in $G$ is a leaf node since every edge has at most one of its end points on the boundary of the polygon (by definition of {\kGonBi}). Therefore $v$ is also a leaf in $T$. Let $s$ be the root of $T$ and $h(u)$ be the height of a vertex $u$ in $T$. It follows from the reduction that $h(u)$ is the value of its $y$-coordinate in $G'$. 

For the rest of this proof we will use the notation $u'$ and $v'$ to denote the respective copies of some two vertices $u$ and $v$ in $G$. Now subdivide every horizontal edge in $G'$ into $2$ edges to get the grid graph $G''$. This ensures that the horizontal distance between the copies of any two vertices in $G'$ is even. First claim is that the number of matchings in $G$ and $G''$ are the same. To see this it is enough to show that: $e=(u,v)$ is an edge in $G$ iff any simple path from $u'$ to $v'$ has odd length. If $e$ is a tree edge then the vertical distance between $u'$ and $v'$ is $1$ and the horizontal distance is even. Thus the distance between them on the grid is odd and therefore any path between them on the grid has odd length. Similarly, if $e$ is a non-tree edge, then $h(u)$ and $h(v)$ have different parity and therefore the vertical distance between them is odd. 

Now we will see how to construct the grid graph $H$ as required by the Lemma. Let $G''$ be a $m_1 \times m_2$ grid. Construct an empty grid $H$, of size $(m_1+2) \times (m_2+2)$. Place the grid $G''$ on $H$ so that $G''$ lies properly inside the grid (that is no edge of $G''$ has an end point on any of the boundary vertices of $H$). Suppose two vertices $u$ and $v$ in $G$ get identified when $G$ is thought of as a genus $g$ graph. Then from our earlier observation we have that both $u'$ and $v'$ must be leaf nodes and lie on the outer face of $G''$. Also $h(u)$ and $h(v)$ must have the same parity, since otherwise we can construct an odd cycle in $G$ by traversing from $s$ to $u$ (which is the same as $v$) and back to $s$ via $v$. This implies that the $y$-coordinate of both $u'$ and $v'$ in $G''$ has the same parity. Drop a path from $u$ (and similarly a path from $v$) by going down all the way to the south border of $H$. Observe that  the sum of the lengths of these two paths is even. This is because, the difference in their $y$-coordinates is even. This ensures that matching is preserved by adding these paths.

The ordering of the segments in the outer face that get glued, is same in both $H$ and $G$ since faces are preserved by the reduction in \cite{ABCDR09}. Also the by our construction the length of each segment is even since the horizontal distance between two vertices is a multiple of $2$. Additionally from there are no edges along the boundary of the grid as required.
\end{proof}

\subsection{Any graph in a``genus $g$ grid" is bipartite}
 \begin{lemma}
\label{lem:even}
Any graph $G \in \GG$ is bipartite.
\end{lemma}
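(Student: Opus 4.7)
The plan is to exhibit an explicit 2-coloring of the vertices. On the underlying grid portion of $H$ (the outer $(m_1+2)\times(m_2+2)$ grid into which $G''$ and the drop-down paths are embedded), I would use the standard parity coloring: assign vertex $(x,y)$ the color $(x+y) \bmod 2$. Every grid edge changes exactly one coordinate by $1$, so it necessarily joins vertices of opposite colors. So the only obstruction to bipartiteness comes from the boundary identifications of the polygonal schema, which glue pairs of vertices lying on the outer face of $H$.

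Next I would verify that any two identified vertices receive the same color under this parity coloring; once that is established, the 2-coloring descends to the quotient graph in $\GG$ and we are done. All identified vertices lie on the south border of $H$ (that is where the drop-down paths from Lemma~\ref{lemma:grid_reduction} terminate), so their $y$-coordinates are equal, and the task reduces to comparing the parities of their $x$-coordinates. Recall from the construction that every horizontal edge of $G'$ is subdivided into two edges to form $G''$, so the $x$-coordinates of any two original vertices differ by an even amount; moreover the drop-down paths are arranged so that the total length added to a pair of identified vertices is even. Consequently each boundary segment of the polygonal schema has even length when measured on the grid, and the endpoints of each segment lie at $x$-coordinates of a common parity.

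Given that each segment has even length, the standard parity coloring restricted to a segment alternates between the two colors, and the two endpoints of the segment share the same color. When two segments $\sigma$ and $\bar\sigma$ (or $\sigma$ and $\sigma$ in the non-orientable case handled earlier via Theorem~\ref{theorem:nonorientable}) are glued, vertices at distance $i$ from one endpoint of one segment are identified with the vertices at distance $i$ (or $L-i$ in the reversed case, where $L$ is the even segment length) from the corresponding endpoint of the other segment; in either case the two positions have $x$-coordinates of the same parity, hence the same color.

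The main obstacle, and the step that requires the most care, is the bookkeeping of parities across the gluings: one must trace through the reductions (the subdivision of horizontal edges, the drop-down paths to the south border, and the ordering of segments preserved by \cite{ABCDR09}) to see that every identification is consistent with the parity coloring. Once this compatibility is verified, the parity coloring on $H$ descends to a proper 2-coloring of $G$, proving that $G$ is bipartite.
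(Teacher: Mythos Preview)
Your approach---exhibiting an explicit 2-coloring via the parity $(x+y)\bmod 2$---is genuinely different from the paper's and in principle cleaner. The paper instead shows directly that every cycle has even length: it decomposes a cycle $C$ into maximal subpaths $P_i$ lying inside the grid (with endpoints on boundary segments), replaces each $P_i$ by a boundary path $P_i'$ with the same endpoints (same length parity, since $P_i\cup P_i'$ bounds a planar region on the grid), and then uses the evenness of the segment lengths to show $\sum_i|P_i'|$ is even via a telescoping argument.

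Your execution, however, has a real gap. Lemma~\ref{lem:even} is stated for the abstract class $\GG$, whose definition says only that the \emph{entire} boundary of the $2m\times 2m$ grid is partitioned into $4g$ even-length segments, with $S_i$ glued to $S_i'$ and no edges along the boundary. It does not say identified vertices lie on the south border, nor anything about subdivided horizontal edges or drop-down paths---those are artifacts of the particular construction in Lemma~\ref{lemma:grid_reduction}, not properties of the class $\GG$. A proof must work from the definition alone, so your reduction to ``$x$-coordinates of the same parity'' fails: segments live on all four sides of the grid, and the relevant invariant is $(x+y)\bmod 2$, not $x\bmod 2$. You also need one step you do not quite state: not merely that the two endpoints of a \emph{single} segment share a color, but that \emph{all} segment corners around the boundary share the same color (this is what links the head of $S_i$ to the head of $S_i'$, which are corners of different segments); that follows immediately since every segment has even length. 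With this in hand, the $j$th vertex of $S_i$ and the $j$th vertex of $S_i'$ are each at boundary-distance $j$ from a corner of that common color, hence receive the same color, and your coloring descends as desired.
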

\begin{proof}
Let $C$ be a cycle in $G$. First we consider the case when $C$ is a simple cycle. Partition $C$ into paths $P_1 = (p_1,\ldots, p_2) ,P_2 = (p_2,\ldots, p_3), \ldots , P_k = (p_k,\ldots, p_1)$, such that each $P_i$ lies entirely in the grid with its two end points $p_i$ and $p_{i+1}$ lying on some two segments. An example of this partition is shown in Figure \ref{fig:example_cycle} for the respective cycle. For each path $P_i$ construct a path $P'_i$ by moving along the border of the grid from $p_i$ to $p_{i+1}$ along a fixed direction (say in clockwise direction). 

Fix an $i \in [k]$. Consider the partition of $P'_i$, induced by the segments along which it passes. Denote the first and the last partition by $P'_{i_1}$ and $P'_{i_2}$ respectively. Note that any of the intermediate partitions of $P'_i$ has even length since the length of an intermediate partition equals the length of the corresponding segment and hence is even. Therefore we have,
\begin{eqnarray}
\label{eqn:one}
|P'_i| \mod 2 & = & (|P'_{i_1}| + |P'_{i_2}|) \mod 2.
\end{eqnarray}
Also, 
\begin{eqnarray}
\label{eqn:two}
|P_i| \mod 2 &=& |P'_i| \mod 2, 
\end{eqnarray}
because the path $P_i$ and $P'_i$ together form a simple cycle on the grid and any cycle that lies entirely on the grid has even length. Consider the sum,
\begin{eqnarray}
\mathcal{S} &=& \sum_{i=1}^k (|P'_{i_1}| + |P'_{i_2}|).
\end{eqnarray}
Rearranging we get, 
\begin{eqnarray}
\mathcal{S} & = & |P'_{1_1}| + |P'_{k_2}| + \sum_{i=1}^{k-1} (|P'_{i_2}| + |P'_{{(i+1)}_1}|).
\end{eqnarray}
Since $|P'_{i_2}|$ and $|P'_{{((i+1) \mod k)}_1}|$ are equal, we have,
\begin{eqnarray}
\label{eqn:five}
\mathcal{S} & = & 2|P'_{1_1}|  + \sum_{i=1}^{k-1} 2|P'_{i_2}|.
\end{eqnarray}

Now combining Equations \eqref{eqn:one}, \eqref{eqn:two} and \eqref{eqn:five}, we have $ \sum_{i=1}^k |P_i|$ is even and thus $C$ is of even length.

If $C$ is non-simple, then $C$ can be decomposed into a collection of simple cycles $\{C_j\}$ such that $|C| = \sum_j C_j$. Now using the previous part we get that $C$ has even length.
\end{proof}

\section{New Upper Bounds}

In this section we establish new upper bounds on the space complexity of certain matching problems on bipartite constant genus graphs, embedded on a `genus $g$ grid'.

We define {\GG} to be the class of genus $g$ graphs such that: for every $G \in \GG$, $G$ is a grid graph embedded on a grid of size $2m \times 2m$. We assume that the distance between adjacent horizontal (and similarly vertical) vertices is of unit length. The entire boundary of the grid is divided into $4g$ segments, and each segment has even length, for some constant $g$. The $4g$ segments are labelled as $(S_1, S_2, S_1', S_2', \ldots S_{2i-1}, S_{2i}, S_{2i-1}', S_{2i}',$ $\ldots , S_{2g-1}, S_{2g}, S_{2g-1}', S_{2g}')$, together with a direction, namely, $S_i$ is directed from left to right and $S_i'$ is directed from right to left for each $i \in [2g]$. The $j$th vertex on a segment $S_i$ is the $j$th vertex on the border of the grid, starting from the head of the segment $S_i$ and going along the direction of the segment. Finally the segments $S_i$ and $S_i'$ are glued to each other for each $i \in [2g]$ in the same direction. In other words, the $j$th vertex on segment $S_i$ is the same as the $j$th vertex on segment $S_i'$. Also there are no edges along the boundary of the grid.

\begin{definition}
If $C$ is a cycle in $G$, we denote the circulation of $C$ with respect to a weight function $w$ as $circ_w(C)$. For any subset $E' \subseteq C$, $circ_w(E')$ is the value of the circulation restricted to the edges of $E'$.
\end{definition}

\begin{theorem}[Main Theorem]
\label{theorem:main}
There exists a logspace computable and polynomially bounded weight function $W$, such that for any graph $G \in \GG$ and any cycle $C \in G$, $circ_W(C) \ne 0$.
\end{theorem}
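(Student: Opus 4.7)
The plan is to build, as sketched in the introduction, a family of $4g+1$ auxiliary weight functions $\{w_{\mathrm{pl}}\} \cup \mathcal{W}_1 \cup \mathcal{W}_2$ on $E(G)$, each tailored to one topologically distinct class of simple cycles, and then combine them into a single weight function
\[
W \;=\; w_{\mathrm{pl}} \;+\; \sum_{i=1}^{2g} w^{\mathrm{ns}}_i \cdot (n^{c})^{i} \;+\; \sum_{i=1}^{2g} w^{\mathrm{sep}}_i \cdot (n^{c})^{2g+i}
\]
for a suitable constant $c$ (for example $c=4$). Because $g$ is constant, the number of components is $O(1)$, each component is individually polynomially bounded and logspace computable, so $W$ itself is logspace computable and polynomially bounded. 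The polynomial-scale separation ensures that if even one component $w_j$ gives $C$ a nonzero circulation, then its contribution to $circ_W(C)$ cannot be cancelled by the others.

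First I would classify a simple cycle $C$ by its interaction with the $2g$ glued pairs of boundary segments $(S_i, S_i')$: type (I) cycles lie entirely in the interior of the grid and cross no pair; type (II) cycles are surface-separating but cross at least one pair; type (III) cycles are surface-non-separating. For type (I), $C$ is a simple cycle of a planar bipartite grid, and the planar weight function $w_{\mathrm{pl}}$ from \cite{DKR08} directly gives a nonzero circulation. For type (III), the basic $\mathbb{Z}_2$-homology fact that a non-separating cycle has nontrivial class in $H_1(\Gamma;\mathbb{Z}_2)$, together with the fact that the $2g$ glued pairs form a $\mathbb{Z}_2$-basis of $H_1$ at the polygonal-schema level, forces $C$ to cross some pair $(S_i, S_i')$ an odd number of times. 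For each $i$ I would then define $w^{\mathrm{ns}}_i \in \mathcal{W}_1$ so that the contribution of each edge of $C$ crossing $(S_i, S_i')$ to $circ_{w^{\mathrm{ns}}_i}(C)$ is a signed quantity determined by the position of the crossing along $S_i$, and zero for all edges of $C$ not crossing this pair; an odd number of such signed contributions cannot telescope to zero.

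For type (II) cycles I would rely on the structural claim, which I would prove separately: a surface-separating cycle crossing $(S_i, S_i')$ does so an even number of times, and the successive crossings (ordered along $C$) alternate in direction across $S_i$ between ``ingoing'' and ``outgoing.'' Given this alternation, I would let $w^{\mathrm{sep}}_i \in \mathcal{W}_2$ assign each crossing edge a weight depending linearly on the index of its endpoint along $S_i$, so that each matched ``ingoing/outgoing'' pair contributes an amount equal to the positive position difference between the two crossing points, with a consistent sign. Summing over all matched pairs yields a nonzero total, and since every type (II) cycle crosses at least one pair, some $w^{\mathrm{sep}}_i$ witnesses its nonzero circulation.

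The hardest step will be establishing the alternation property for type (II) cycles rigorously across the various possible normal forms of the polygonal schema and the case of cycles simultaneously crossing several distinct segment pairs; the cleanest route seems to be an inductive argument on the schema, using the fact that lifting a separating cycle to the universal cover of the interior disk produces a \emph{closed} lift precisely because of the even crossing counts, with the alternation then corresponding to how lifted endpoints must be paired up under the side identifications. A secondary technical issue is calibrating the weights in $\mathcal{W}_1$ so that the bipartiteness of $G$ (Lemma~\ref{lem:even}) and the assumption that each boundary segment has even length (built into the definition of \GG) do not accidentally force the odd-crossing contribution to vanish; these parity features will guide the exact numerical choices of weights.
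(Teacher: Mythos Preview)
Your plan matches the paper's proof almost point for point: the same trichotomy (interior / non-separating / separating-but-crossing), the same $4g+1$ auxiliary weights combined by polynomial-scale coefficients, the $\mathbb{Z}_2$-homology argument forcing a non-separating cycle to cross some $S_i$ oddly, and an alternation property for separating cycles that cross the boundary. Two specifics deviate from the paper, and one of them is a real gap.

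For the non-separating case the paper takes $w_i\equiv 1$ on edges incident to $S_i$ and $0$ elsewhere, so an odd number of $\pm 1$ contributions automatically gives an odd circulation. Your position-dependent $w^{\mathrm{ns}}_i$ loses this: ``an odd number of such signed contributions cannot telescope to zero'' is simply false for arbitrary signed integers (e.g.\ $+1,+2,-3$). This is easily repaired by reverting to the constant weight.

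The substantive issue is the alternation statement. You order the crossings \emph{along $C$}; the paper (Lemma~\ref{lem:alternate}) orders them \emph{along $S_i$}, and only the latter is true. For a counterexample take the genus-$2$ waist curve $C$ (separating the surface into two one-holed tori $A,B$) and a meridian $a_1$ isotoped to meet $C$ in four points $p_1,p_2,p_3,p_4$ (cyclic along $C$), with the $B$-arcs pairing $\{p_1,p_4\}$ and $\{p_2,p_3\}$ and the $A$-arcs pairing $\{p_4,p_2\}$ and $\{p_3,p_1\}$ (these can be made disjoint in $A$ by routing over the handle). The crossing signs then read $+,+,-,-$ along $C$ but $+,-,+,-$ along $a_1$. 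Because $S_i$ itself is non-separating, a cycle can hit the same side of $S_i$ twice in a row when traversed along $C$; it is the fact that $C$ is separating that forces alternation along $S_i$, via exactly the argument in the paper's proof of Lemma~\ref{lem:alternate}. This ordering is what makes Lemma~\ref{lem:weight} work: with crossings indexed $i_1<i_2<\cdots<i_{2p}$ \emph{by position on $S_1$}, the circulation becomes $\big|\sum_k(i_{2k}-i_{2k-1})\big|$ with every summand strictly positive. Your ``matched ingoing/outgoing pairs along $C$ with consistent sign'' does not hold, and your universal-cover sketch does not make clear which ordering it would deliver.
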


\begin{theorem}
For a graph embedded on a constant genus surface,
\begin{itemize}
\item[(a)]
{\isMatch} is in {\SPL},
\item[(b)]
{\constructMatch} is in $\FL^{\SPL}$ and 
\item[(c)]
{\isUniqueMatch} is in {\SPL}. 
\end{itemize}
\end{theorem}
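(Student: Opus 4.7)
The plan is to combine the two main ingredients already in hand---the reduction of Theorem \ref{theorem:main_reduction} and the isolating weight function of Theorem \ref{theorem:main}---with the standard \SPL\ machinery of Lemma \ref{lemma:spl} to dispatch all three parts. Given a bipartite graph $G$ with a combinatorial embedding on a surface of constant genus, I would first apply Theorem \ref{theorem:main_reduction} to obtain in logspace a matching-preserving reduction to a graph $G' \in \GG$. Theorem \ref{theorem:main} then supplies a logspace-computable, polynomially bounded weight function $W$ such that every simple cycle of $G'$ has non-zero circulation. Crucially, this property is inherited by every edge-deletion subgraph $G' \setminus E'$, since cycles of $G' \setminus E'$ are also cycles of $G'$. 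Hence Lemma \ref{lemma:uniquepm} gives a unique minimum-weight perfect matching in every $G' \setminus E'$, satisfying the hypothesis of Lemma \ref{lemma:spl}. This yields \isMatch\ on $G'$ in \SPL\ and \constructMatch\ on $G'$ in $\FL^{\SPL}$; the matching-preserving reduction then transfers these upper bounds to $G$, establishing parts (a) and (b).

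For part (c), I would characterize perfect matching uniqueness by a polynomial-sized Boolean combination of \SPL\ predicates. Call an edge $e = (u,v)$ of $G'$ \emph{avoidable} if some perfect matching contains $e$ and some other perfect matching does not. Equivalently, $e$ is avoidable iff both $G' \setminus e$ and $G' \setminus \{u,v\}$ have perfect matchings (the first witnesses a perfect matching that avoids $e$, the second, by adjoining $e$, a perfect matching that contains $e$). A graph has a unique perfect matching iff it has some perfect matching and no edge is avoidable. The two Decision-BPM queries on the polynomially many subgraphs $G' \setminus e$ and $G' \setminus \{u,v\}$ are each in \SPL, by the argument of parts (a)--(b) applied to these subgraphs, whose cycles remain cycles of $G'$ and therefore inherit the non-zero circulation property under $W$. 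Since \SPL\ is closed under complementation and polynomial-sized Boolean combinations---the associated \GapL\ gap functions may be multiplied and combined while remaining bounded in $\{0,1\}$---\isUniqueMatch\ on $G'$, and via the reduction on $G$, lies in \SPL.

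The bulk of the work, and the main obstacle, is already concentrated in Theorem \ref{theorem:main}: the design and correctness of the isolating weight function $W$ on graphs in \GG, via the algebraic-topological classification of simple cycles on a constant-genus surface into surface-separating and surface-non-separating types. Once that theorem is in place, the derivation of parts (a)--(c) above is comparatively routine.
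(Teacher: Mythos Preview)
Your treatment of parts (a) and (b) matches the paper's exactly: reduce to $\GG$, apply the isolating weight function of Theorem~\ref{theorem:main}, observe that every edge-deletion subgraph inherits non-zero circulations, and invoke Lemmas~\ref{lemma:uniquepm} and~\ref{lemma:spl}.

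For part (c) you diverge from the paper. The paper's argument is to compute the (unique minimum-weight) perfect matching $M$, then for each $e\in M$ test whether $G'\setminus e$ still has a perfect matching; if none does, $M$ is the unique one. Read literally this is an $\FL^{\SPL}$ procedure, and one must further argue that ``$e\in M$'' is itself an \SPL\ predicate to land in \SPL\ proper. Your avoidable-edge characterization is more direct: uniqueness is equivalent to the nonexistence of an edge $e=(u,v)$ for which both $G'\setminus e$ and $G'\setminus\{u,v\}$ have perfect matchings, and each of these is an \isMatch\ instance on a subgraph of $G'$ (hence inheriting non-zero circulations under $W$). This phrases \isUniqueMatch\ immediately as a polynomial Boolean combination of \SPL\ predicates and avoids the detour through \constructMatch. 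Both arguments are correct; yours makes the \SPL\ membership more transparent, while the paper's leans on the already-computed isolated matching.

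One caveat that applies equally to your write-up and to the paper's: transferring the \isUniqueMatch\ bound from $G'$ back to the original $G$ requires that the reduction of Theorem~\ref{theorem:main_reduction} preserve \emph{uniqueness} of perfect matchings, not merely existence. The orientable-case reductions are stated to preserve the number of perfect matchings, but the doubling construction of Theorem~\ref{theorem:nonorientable} is only asserted to preserve existence, so that step deserves a closer look before the transfer is taken for granted.
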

\begin{proof}
As a result of Theorem \ref{theorem:main_reduction}, we can assume that our input graph $G \in \GG$. Using Theorem \ref{theorem:main} and Lemma \ref{lemma:uniquepm} we get a logspace computable weight function $W$, such that the minimum weight perfect matching in $G$ with respect to $W$ is unique. Moreover, for any subset $E' \subseteq E$, Theorem \ref{theorem:main} is valid for  the subgraph $G \setminus E'$ also, with respect to the same weight function $W$. Now (a) and (b) follows from Lemma \ref{lemma:spl}.  Checking for uniqueness can be done by first computing a perfect matching, then deleting an edge from the matching and rechecking to see if a perfect matching exists in the new graph. If it does, then $G$ did not have a unique perfect matching, else it did. Note that Theorem \ref{theorem:main} is valid for any graph formed by deletion of edges of $G$.
\end{proof}

Theorem \ref{theorem:main} also gives an alternative proof of directed graph reachability for constant genus graphs.
\begin{theorem}[\cite{BTV09,KV09}]\label{theorem:reachability}
Directed graph reachability for constant genus graphs is in {\UL}.
\end{theorem}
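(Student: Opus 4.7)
The plan is to adapt the entire pipeline of Theorem \ref{theorem:main_reduction} and Theorem \ref{theorem:main} to directed reachability, and then appeal to the Reinhardt--Allender~\cite{ReinhardtAllender00} characterization of \UL. Recall that if a class of directed graphs admits a logspace computable, polynomially bounded weight function $W$ under which every pair of vertices has at most one minimum weight directed path (the \emph{min-unique} property), then reachability for the class is in \UL. Following \cite{BTV09}, min-uniqueness is implied by a purely combinatorial condition: that the oriented circulation (the signed sum of edge weights around a cycle, with signs dictated by a cycle orientation) of every simple cycle of the underlying undirected graph is non-zero. Indeed, two distinct min-weight $s$-to-$t$ paths would, at their first point of divergence $x$ and first subsequent point of reconvergence $y$, yield two internally vertex-disjoint equal-weight directed paths from $x$ to $y$, which glue into a simple cycle with zero oriented circulation.

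So the task reduces to constructing such a $W$ for directed graphs on constant genus surfaces. First I would reuse the reductions of Corollary \ref{cor:PS_reduction} and Theorem \ref{theorem:nonorientable}: these are purely topological and respect edge directions, and the subdivisions used to push vertices off the polygon's boundary preserve directed paths verbatim. Next I would apply the directed analogue of the grid reduction of Lemma \ref{lemma:grid_reduction}. The directed variant of the underlying ABCDR09 planar grid reduction is already the main technical tool of \cite{BTV09}, so composing these logspace steps produces a directed grid graph $G'' \in \GG$ on which $s$-$t$ reachability coincides with reachability between the corresponding vertices of the original graph.

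On $G''$, the weight functions $\mathcal{W} = \mathcal{W}_1 \cup \mathcal{W}_2 \cup \{w\}$ from the proof of Theorem \ref{theorem:main} should work almost unchanged, provided the associated circulation is reinterpreted as the oriented signed sum of edge weights around a cycle rather than the bipartite alternating sum. The classification of cycles on the fundamental polygon into surface-non-separating, interior-planar, and separating-but-boundary-crossing cycles, and the intersection-parity and alternation arguments that handle them via $\mathcal{W}_1$, $w$, and $\mathcal{W}_2$ respectively, are all properties of the embedding and do not depend on any bipartition. Combining the $4g+1$ weight functions through the same polynomial-in-$n$ scaling used to prove Theorem \ref{theorem:main} yields a single logspace computable, polynomially bounded $W$ with non-zero oriented circulation on every simple cycle of $G''$.

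The hard part will be the directed version of the grid reduction: one must verify that the gadgets of Lemma \ref{lemma:grid_reduction} — edge subdivisions, and the auxiliary paths dropped to the south border that realize the polygon gluings — can be oriented so as to preserve reachability between the original vertices without introducing spurious directed paths across identified segments. This is essentially the content of the directed planar grid reduction of \cite{BTV09}, so once that ingredient is imported the remainder is a direct re-reading of Sections 3--5 in the directed setting, and an application of \cite{ReinhardtAllender00} finishes the proof.
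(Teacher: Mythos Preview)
Your approach is workable but takes a different route from the paper. The paper does not re-run the pipeline for directed graphs; instead it reduces to the bipartite setting by a one-line trick (the unnamed lemma following Lemma~\ref{lemma:reach}): given a directed graph $G$, subdivide every edge of the underlying undirected graph (this stays in the constant-genus class), apply Theorem~\ref{theorem:main} to the resulting bipartite graph to obtain an undirected weight function with non-zero alternating circulations, and then set the weight of the directed edge $(u,v)$ to $a-b$, where $a,b$ are the weights on the two halves of the subdivided edge. This produces skew-symmetric weights with non-zero sum around every directed cycle, and \cite{BTV09,ReinhardtAllender00} finish. Theorem~\ref{theorem:main} is used purely as a black box; none of the grid machinery is revisited.

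Your plan, by contrast, re-examines each component in the directed setting. This can be made to work, but your claim that the weight functions and proofs ``do not depend on any bipartition'' and work ``almost unchanged'' is too optimistic and is where the proposal has a gap. The sign calculus in Lemma~\ref{lem:weight} explicitly invokes Lemma~\ref{lem:even} (bipartiteness of the grid) to determine which edges receive $+$ and which $-$ in the circulation, and the planar weight $w''$ is the \cite{DKR08} bipartite weight, not the \cite{BTV09} directed one. Directed analogues do exist --- define $w'_i$ via crossing direction rather than index parity, and replace $w''$ by the \cite{BTV09} planar weight --- but these are genuine modifications requiring their own verification, not a ``direct re-reading''. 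The paper's subdivide-and-convert reduction sidesteps all of this.
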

\noindent The proof of Theorem \ref{theorem:reachability} follows from Lemma \ref{lemma:reach} and \cite{BTV09}.
We adapt Lemma \ref{lemma:reach} from the journal version of \cite{DKR08} (to appear in Theory of Computing Systems).
\begin{lemma}
\label{lemma:reach}
There exist a logspace computable weight function
that assigns polynomially bounded weights 
to the edges of a directed graph such that:
(a) the weights are skew symmetric, i.e., 
w(u,v) = - w(v,u),  and 
(b) the sum of weights along any (simple) directed
cycle is non-zero.
\end{lemma}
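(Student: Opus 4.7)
Plan: I would mirror the proof of Theorem \ref{theorem:main}, but replace the alternating-sum circulation condition by the ordinary directed sum, and enforce the skew-symmetry condition (a) at the level of each constituent weight function. First, apply the reductions of Lemma \ref{lemma:polySchema}, Theorem \ref{theorem:normal}, and Lemma \ref{lemma:grid_reduction} to the input directed constant-genus graph. These reductions are topological on the underlying undirected graph and go through unchanged in the directed setting (each subdivided edge simply inherits its direction from the original), and they preserve directed reachability. The output is a directed graph $G'$ embedded on a genus-$g$ grid.

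Next, I would build a skew-symmetric weight $W$ by combining directed analogues of the three families $\mathcal{W}_1$, $\mathcal{W}_2$, $\{w\}$ used in the main theorem. The planar ingredient $w$ is defined by assigning each face $f$ of $G'$ on the surface $\Sigma$ a positive integer $\phi(f)$ (for instance via the DFS-based scheme of \cite{BTV09}) and setting $w(u,v) = \phi(f_L) - \phi(f_R)$, where $f_L$ and $f_R$ are the faces immediately to the left and right of the directed edge. This is skew-symmetric by construction, and the directed sum around any null-homologous simple cycle $C$ equals the signed total of $\phi$ over the disk bounded by $C$ (a discrete Stokes telescoping), which is strictly positive. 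For each of the $2g$ pairs of glued boundary segments $(S_i, S_i')$, define $w_i(u,v) = +1$ if $(u,v)$ realizes a crossing of the pairing in one fixed direction, $-1$ in the opposite direction, and $0$ otherwise; the directed sum of $w_i$ along a simple directed cycle then records the $i$-th coordinate of its class in $H_1(\Sigma; \mathbb{Z})$. Finally, set $W = w + \sum_{i=1}^{2g} n^{ci} w_i$ for a sufficiently large constant $c$; since $g$ is constant, $W$ is polynomially bounded and logspace computable, and skew-symmetry is preserved under linear combinations.

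For any simple directed cycle $C$: if $C$ is non-null-homologous then at least one coordinate of its $\mathbb{Z}$-homology class is non-zero, which by the choice of $c$ dominates the planar contribution; if $C$ is null-homologous then, as a simple oriented loop on an orientable surface, $C$ bounds a disk in $\Sigma$ and $w$ contributes a non-zero positive total. The main obstacle I expect is upgrading the $\mathbb{Z}_2$-homology argument used in Theorem \ref{theorem:main} (where a non-separating undirected cycle is only guaranteed to cross some generator an \emph{odd} number of times) to a $\mathbb{Z}$-valued statement: a non-null-homologous simple oriented cycle must have non-zero \emph{algebraic} intersection with at least one dual generator. This follows from Poincar\'e duality provided the $2g$ boundary pairings realize a basis of $H_1(\Sigma; \mathbb{Z})$, which is ensured by the normal form of the polygonal schema produced by Theorem \ref{theorem:normal}. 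A secondary issue is choosing $\phi$ to remain positive and well-defined on all faces of $G'$ viewed on the surface (not only planar faces of the grid), so that the planar weight $w$ correctly detects boundary-crossing separating cycles as well.
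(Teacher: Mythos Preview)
The paper does not prove Lemma~\ref{lemma:reach} from scratch; the lemma immediately following it shows that Theorem~\ref{theorem:main} already implies it via a two-line reduction. Given the directed graph, pass to its underlying undirected graph $G$, subdivide every edge $\{u,v\}$ into a length-two path $u\,w\,v$ to obtain a bipartite $G'$ in the same (subdivision-closed) class, and apply Theorem~\ref{theorem:main} to $G'$. If the two halves of that path receive undirected weights $a$ and $b$, set $w(u,v)=a-b$ and $w(v,u)=b-a$. Skew-symmetry is immediate, and for any simple directed cycle the directed edge-sum equals, up to a global sign, the alternating circulation of the corresponding even cycle in $G'$, which is non-zero by Theorem~\ref{theorem:main}. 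So the paper uses the bipartite result as a black box rather than rebuilding the weight families in the directed setting.

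Your route is genuinely different, and as written it has a real gap in the ``planar'' ingredient. The assignment $w(u,v)=\phi(f_L)-\phi(f_R)$ does \emph{not} telescope to ``the signed total of $\phi$ over the disk bounded by $C$'': summing along $C$ only picks up the faces adjacent to $C$ (with multiplicity), not all interior faces, so there is no reason for the sum to be non-zero. The BTV09-style weight whose cycle sum equals enclosed area is not of this potential-difference form. Relatedly, a null-homologous simple cycle on a genus-$g$ surface need not bound a disk; it only separates $\Sigma$ into two regions, neither of which is planar in general. Finally, you have no analogue of the paper's $w'_i$ family: your $w_i$'s detect non-trivial $\mathbb{Z}$-homology and your $w$ (even if repaired) handles cycles inside the polygon, but a separating cycle that crosses each boundary segment evenly and with zero algebraic count is invisible to both. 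You flag this as a ``secondary issue,'' but it is not fixable just by choosing $\phi$ well; it needs an extra mechanism. The paper's subdivision trick sidesteps all of this by inheriting the full case analysis of Theorem~\ref{theorem:main}, including the $w'_i$ weights, for free.
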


\begin{lemma}
In any class of graphs closed under the subdivision of edges,
{\em Theorem \ref{theorem:main}} 
implies the hypothesis of 
{\em Lemma \ref{lemma:reach}}.
\end{lemma}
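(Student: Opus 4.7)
The plan is to reduce the hypothesis of Lemma~\ref{lemma:reach} for a directed graph $D$ to an application of Theorem~\ref{theorem:main} on its subdivided version. Given any directed graph $D$ in the class $\mathcal{C}$, let $D'$ be the undirected graph obtained by subdividing every (undirected) edge of $D$ exactly once. Then $D'$ is bipartite (original vertices vs.\ subdivision vertices), and $D' \in \mathcal{C}$ by closure under subdivision, so Theorem~\ref{theorem:main} yields a logspace computable, polynomially bounded weight function $W'$ on $D'$ whose circulation is non-zero on every cycle.

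Next, for each undirected edge $\{u,v\}$ of $D$, let $x_{uv}$ denote the subdivision vertex placed on it, and define
\[
w(u,v) \;:=\; W'(\{u, x_{uv}\}) - W'(\{x_{uv}, v\}), \qquad w(v,u) \;:=\; -w(u,v).
\]
Skew-symmetry holds by construction, and logspace computability together with polynomial boundedness are inherited directly from $W'$. The only remaining property to verify is that $\sum_{i=1}^{k} w(v_i, v_{i+1}) \neq 0$ for every simple directed cycle $v_1 \to v_2 \to \cdots \to v_k \to v_1$ in $D$.

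For this, note that the directed cycle lifts to the simple even cycle
$v_1, x_1, v_2, x_2, \ldots, v_k, x_k, v_1$ of length $2k$ in $D'$, where $x_i$ is the subdivision vertex on $\{v_i, v_{i+1}\}$. Enumerating its edges in order and computing $\mathrm{circ}_{W'}$ according to the definition $\sum_{i=1}^{2k}(-1)^i W'(e_i)$, the alternating signs pair each edge $\{v_i,x_i\}$ with the next edge $\{x_i,v_{i+1}\}$ with opposite signs, so the sum telescopes to $-\sum_{i=1}^{k} w(v_i, v_{i+1})$. Since Theorem~\ref{theorem:main} ensures $\mathrm{circ}_{W'}(\cdot)\neq 0$ on every cycle in $D'$, this directly yields $\sum_i w(v_i,v_{i+1}) \neq 0$, as required.

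The only real subtlety is aligning the sign conventions: the alternating factor $(-1)^i$ used in the definition of circulation must line up with the pairing of edges across each subdivision vertex. The chosen definition $w(u,v) = W'(\{u,x_{uv}\}) - W'(\{x_{uv},v\})$ is precisely engineered so that each pair of adjacent edges in $D'$ contributes either $+w(v_i,v_{i+1})$ or $-w(v_i,v_{i+1})$ to the circulation, making the telescoping immediate. Everything else is a direct transcription of properties of $W'$.
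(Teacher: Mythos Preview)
Your proposal is correct and follows essentially the same approach as the paper: subdivide each edge once to obtain a bipartite graph in the class, apply Theorem~\ref{theorem:main} to get a weight function with non-zero circulations, and define the skew-symmetric weight on a directed edge $(u,v)$ as the difference of the weights on the two halves of the subdivided edge. Your write-up is in fact more detailed than the paper's, which simply asserts that non-zero circulations in the subdivided graph ``translate into'' non-zero directed cycle sums without spelling out the telescoping computation you provide.
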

\begin{proof}
Given an undirected graph $G,$ construct a {\em bipartite}
graph $G'$ as follows:
replace every undirected edge $\{u,v\}$ by
a path $u-w-v$ of length two. 
Use Lemma \ref{lemma:reach} to assign weights to the edges of $G'.$
Suppose that the weight assigned to the undirected edge $\{u,w\}$ in $G'$ is $a$ and the weight of $\{w,v\}$ is $b.$ 
Let $\overrightarrow{G}$ denote the directed graph obtained
from $G$ by considering each undirected edge as two 
directed edges in opposite directions.
Now we assign the weights to the edges of $\overrightarrow{G}$ as follows:
directed edge $(u,v)$ gets weight $a - b;$ whereas the directed edge $(v,u)$ will get weight $b - a.$ 
The circulations of the cycles in $G'$ being non-zero will
translate into the sum of the edges along any cycle
in the directed graph $\overrightarrow{G}$ being non-zero.
\end{proof}




\subsection{Proof of Main Theorem}
\begin{proof}[Proof of Theorem \ref{theorem:main}]
For a graph $G \in \GG$, we define $W$ is a linear combination of the following $4g+1$ weight functions defined below. This is possible in logspace since $g$ is constant.


Define $4g+1$ weight functions as follows:

\begin{itemize}

\item[-]
For each $i \in [2g]$,
\begin{equation}
w_{i}(e) = 
\left\{
\begin{array}{ll}
     1 & \textrm{if $e$ lies on the segment $S_i$} \\
 0 & \textrm{otherwise} \\
\end{array}
\right.
\end{equation}

\item[-]
For each $i \in [2g]$,

\begin{equation}
w'_{i}(e) = 
\left\{
\begin{array}{ll}
     j & \textrm{if $e$ lies on the segment $S_i$ at index $j$ from the head of $S_i$ and $j$ is odd} \\
     -j & \textrm{if $e$ lies on the segment $S_i$ at index $j$ from the head of $S_i$ and $j$ is even}  \\    
 0 & \textrm{otherwise} \\
\end{array}
\right.
\end{equation}

\item[-]
\begin{equation}
w''(e) = 
\left\{
\begin{array}{ll}
     0 & \textrm{if one end of $e$ lies on the boundary of the grid} \\
     0 & \textrm{if $e$ does not lie on the boundary and $e$ is a vertical edge} \\
     (-1)^{i+j} (i+j-1) & \textrm{if $e$ is the $j$th horizontal edge from left, lying in row $i$}\\
     & \textrm{from bottom, and not lying on the boundary}\\    
\end{array}
\right.
\end{equation}

\end{itemize}

Note that if $e$ does not lie on the boundary of the grid then $w''(e)$ is same as the weight function defined in \cite{DKR08}. 

If $C$ is a cycle in $G$, we denote the circulation of $C$ with respect to a weight function $w$ as $circ_w(C)$. For any subset $E' \subseteq C$, $circ_w(E')$ is the value of the circulation restricted to the edges of $E'$. An example of a cycle on a grid is given in Figure \ref{fig:example_cycle}.

Let $C$ be a simple cycle in $G$. If $C$ is surface non-separating, then $circ_{w_i}(C) \ne 0$ for some $i$. If $C$ is surface separating and crosses the boundary of the grid at some vertex $v$, then $circ_{w'_i}(C) \ne 0$ for $i$, such that $v$ lies in the segment $S_i$. If $C$ does not intersect any of the boundary segments, then $C$ does not have any edge on the boundary since there are no edges along the boundary by definition of {\GG}. Therefore $circ_{w''}(C) \ne 0$ by \cite{DKR08}.

Without loss of generality, assume $C$ intersects segment $S_1$. Let $E^C_1$ be the set of edges of $C$ that intersect $S_1$. Note that $circ_{w_1}(C) = circ_{w_1}(E^C_1)$ (same thing holds for $w'_1$ as well. We can assume that $|E^C_1|$ is even since otherwise $circ_{w_1}(E^C_1)$ is odd and hence non-zero. By Lemma \ref{lem:alternate} it follows that the edges of $E^C_1$, alternate between going out and coming into the grid. Then using Lemma \ref{lem:weight} we get that $circ_{w'_1}(E^C_1) \ne 0$ and thus $circ_{w'_1}(C) \ne 0$. (See below for Lemma \ref{lem:alternate} and \ref{lem:weight})
\end{proof}

\begin{figure}
\centering
\subfigure[Example of a cycle on the grid that crosses each segment an even number of times with the weights $w'_1$] 
{
    \label{fig:example_cycle}
\begin{tikzpicture}[scale=.4,shorten >=.35mm,>=latex]

 \tikzstyle gridlines=[color=black!20,very thin]
 \draw[color=black!50,very thin] (0,0) grid (12,12);

 \foreach \x in {1,...,12}
  \foreach \y in {1,...,12}
   {
     \draw[fill,color=black!90] (\x,\y) circle (0.35mm);
   }

\draw [<-] (13,0) -- (13,6);
\draw [<-] (13,6) -- (13,12);
\draw [->] (-1,0) -- (-1,6);
\draw [->] (-1,6) -- (-1,12);
\draw [->] (0,-1) -- (6, -1);
\draw [->] (6,-1) -- (12,-1);
\draw [<-] (0,13) -- (6, 13);
\draw [<-] (6,13) -- (12, 13);

\node at (3,-1.5) {$S_1$};
\node at (9,-1.5) {$S_2$};
\node at (3,13.5) {$S_4$};
\node at (9,13.5) {$S_3$};
\node at (13.5,3) {$T_1$};
\node at (13.5,9) {$T_2$};
\node at (-1.5,9) {$T_3$};
\node at (-1.5,3) {$T_4$};

\draw [very thick,-] (1,0) -- (1, 2) --(2,2) -- (2,0) ;
\node [right] at (2,1.5) {$P_1$};
\draw [very thick,-] (12,5) -- (8,5) -- (8,12);
\node [left] at (8,9.5) {$P_2$};
\draw [very thick,-] (0,11) -- ( 6,11) -- (6,0);
\node [right] at (6,4.5) {$P_3$};
\draw [very thick,-] (12,1) -- (9,1) -- (9,3) -- (12,3);
\node [left] at (9,1.5) {$P_4$};
\draw [very thick,-] (4,0) -- (4,3) -- (0,3);
\node [right] at (4,2.5) {$P_5$};
\draw [very thick,-] (4,12) -- (4,12) -- (2,12) -- (2,12);
\node [left] at (4,12.5) {$P_6$};
\draw [very thick,-] (0,5) -- (3,5) -- (3,8) -- (0,8);
\node [left] at (3,6.5) {$P_7$};
\draw [very thick,-] (11,12) -- (11,6) -- (12,6);
\node [left] at (11,7.5) {$P_8$};

\node at (1,-0.5) {$1$};
\node at (2,-0.5) {$-2$};
\node at (3,-0.5) {$3$};
\node at (4,-0.5) {$-4$};
\node at (5,-0.5) {$5$};
\node at (6,-0.5) {$-6$};

\end{tikzpicture}
}
\hspace{1cm}
\subfigure[Construction of a path from $Q_1$ to $Q_2$ in $\Gamma \setminus C$ (the dotted path is the shortest path between $Q_1$ and $Q_1'$ (resp. between $Q_2$  and $Q_2'$).] 
{
    \label{fig:intersection}
\begin{tikzpicture}[scale=1,shorten >=.35mm,>=latex]

\draw [->] (0,0) -- (6,0);
\draw [->] (2,-2) -- (2,2);
\draw [->] (4,-2) -- (4,2);
\draw [-,very thick] (2.2,-1) -- (2.2,0) -- (3.8,0) -- (3.8,2);
\draw [-,very thick] (1.8,-2) -- (1.8, -1);
\draw [dashed, thick, -] (1.8,-1) -- (0,-1);
\draw [dashed, thick,-] (3,-2) -- (2.2,-1); 

\draw[fill,color=black!90] (3,-2) circle (0.5mm);
\draw[fill,color=black!90] (0,-1) circle (0.5mm);
\draw[fill,color=black!90] (2.2,0) circle (0.5mm);
\draw[fill,color=black!90] (3.8,0) circle (0.5mm);
\draw[fill,color=black!90] (2.2,-1) circle (0.5mm);
\draw[fill,color=black!90] (1.8,-1) circle (0.5mm);

 \draw [below] node at (0,-1) {$Q_2$};
 \draw [right] node at (3,-2) {$Q_1$};
 \draw [above] node at (1.6,-1) {$Q_2'$};
 \draw [right] node at (2.2,-1) {$Q_1'$};
 \draw [left] node at (2,1.5) {$C$};
 \draw [right] node at (4,1.5) {$C$};
 \draw [above] node at (2.3,0) {$P_1$};
 \draw [below] node at (3.7,0) {$P_2$};
 \draw [below] node at (6,0) {$C_j$};
\end{tikzpicture}
}
\caption{}
\label{fig:example} 
\end{figure}

To establish Lemma \ref{lem:alternate} we use an argument (Lemma \ref{lem:cabello}) from homology theory. For two cycles (directed or undirected) $C_1$ and $C_2$, let $I(C_1,C_2)$ denote the number of times $C_1$ and $C_2$ cross each other (that is one of them goes from the left to the right side of the other, or vice versa). 

Next we adapt the following Lemma from Cabello and Mohar \cite{CabelloMohar07}. Here we assume we are given an orientable surface (Cabello and Mohar gives a proof for a graph on a surface).
\begin{lemma}[\cite{CabelloMohar07}]
\label{lem:cabello}
Given a genus $g$ orientable, surface $\Gamma$, let $\mathcal{C} = \{C_i\}_{i \in[2g]}$ be a set of cycles that generate the first homology group $H_1(\Gamma)$. A cycle $C$ in $\Gamma$ in non-separating if and only if there is some cycle $C_i \in \mathcal{C}$ such that $I(C,C_i) \equiv 1 (\mod 2)$.
\end{lemma}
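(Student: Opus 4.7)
The plan is to reduce the statement to the non-degeneracy of the mod-2 intersection pairing on $H_1(\Gamma;\mathbb{Z}_2)$, exploiting exactly the homology framework introduced in the preliminaries (where cycles form a group under symmetric difference and $\mathcal{B}_1(\Gamma)$ consists of cycles bounding a region). First I would verify that $I(\cdot,\cdot)\bmod 2$ descends to a well-defined bilinear form on $H_1(\Gamma;\mathbb{Z}_2)$: if $C$ and $C'$ represent the same homology class, then $C\Delta C'=\partial R$ for some 2-chain $R$, and any cycle $D$ transverse to $\partial R$ must enter and leave $R$ an equal number of times, so $I(C,D)\equiv I(C',D)\pmod 2$. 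Additivity in each argument follows from the same bounding argument together with the fact that the group operation on $\mathcal{C}_1(\Gamma)$ is symmetric difference.

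For the ``if'' direction, suppose $C$ is separating. Then $C$ is the boundary of some region in $\Gamma$, i.e.\ $[C]=0$ in $H_1(\Gamma;\mathbb{Z}_2)$. By the well-definedness just established, $I(C,C_i)\equiv I(\emptyset,C_i)\equiv 0\pmod 2$ for every generator $C_i$. Contrapositively, if $I(C,C_i)\equiv 1\pmod 2$ for some $i$, then $C$ cannot bound, and hence is non-separating.

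For the ``only if'' direction, assume $C$ is non-separating so that $[C]\neq 0$ in $H_1(\Gamma;\mathbb{Z}_2)$. The key topological input I would invoke is that on a closed orientable surface of genus $g$, the mod-2 intersection form on $H_1(\Gamma;\mathbb{Z}_2)\cong\mathbb{Z}_2^{2g}$ is non-degenerate (a consequence of Poincar\'e duality for closed surfaces). Hence there exists a homology class $[D]$ with $I([C],[D])\equiv 1\pmod 2$. Writing $[D]=\sum_{i=1}^{2g} a_i[C_i]$ in terms of the given generating system and using bilinearity,
\[
1\equiv I([C],[D])\equiv \sum_{i=1}^{2g} a_i\, I(C,C_i)\pmod 2,
\]
so at least one summand $I(C,C_i)$ must be odd, as required.

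The main obstacle is justifying two topological facts: that $I$ descends to homology mod 2, and that the resulting pairing is non-degenerate. Both are standard results in algebraic topology (e.g.\ in Massey, the reference used in the preliminaries), so the cleanest route is to invoke them directly rather than rebuild the theory. A minor technical point is ensuring the cycles $C$ and $C_i$ can be put in general position so that $I(C,C_i)$ is defined as a transverse intersection count; this can be handled by a small perturbation and does not affect the parity.
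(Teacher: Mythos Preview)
Your proposal is correct and structurally matches the paper's proof: both show the intersection number mod~2 descends to a homomorphism on $H_1$, dispose of the separating case because a bounding cycle pairs to zero with everything, and handle the non-separating case by producing a class $[D]$ with $I(C,D)\equiv 1$ and expanding $[D]$ in the generators $C_i$.

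The one genuine difference is how the class $[D]$ is obtained. You invoke Poincar\'e duality to assert non-degeneracy of the mod-2 intersection form, which is correct but imports a substantial theorem. The paper instead argues directly and geometrically: since $C$ is non-separating, $\Gamma\setminus C$ is connected, so one can build a cycle $C'$ crossing $C$ exactly once (take a short arc across $C$ and close it up by a path in $\Gamma\setminus C$). This is more elementary and self-contained, and it sidesteps the need to first establish $[C]\neq 0$ and then appeal to duality. Your route is cleaner algebraically; the paper's is lighter on prerequisites.
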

\begin{proof}
Let $\tilde{C}$ be some cycle in $\Gamma$. We can write $\tilde{C} = \sum_{i \in [2g]} t_i C_i$ since $\mathcal{C}$ generates $H_1(\Gamma)$. Define $I_{\tilde{C}} (C) = \sum_{i \in [2g]} t_i I(C,C_i) (\mod 2)$. One can verify that $I_{\tilde{C}} : \mathcal{C}_1({\Gamma}) \rightarrow \mathbb{Z}_2$ is a group homomorphism. Now since $\mathcal{B}_1(\Gamma)$ is a normal subgroup of $\mathcal{B}_1(\Gamma)$, $I_{\tilde{C}}$ induces a homomorphism from $H_1(\Gamma)$ to $\mathbb{Z}_2$.

Any cycle is separating if and only if it is homologous to the empty set. Therefore if $C$ is separating, then $C \in \mathcal{B}_1(\Gamma)$ and thus every homomorphism from $H_1(\Gamma)$ to $\mathbb{Z}_2$ maps it to $0$. Hence for every $i \in [2g]$, $I(C,C_i) \equiv I_{C_i} (C) = 0$.

Suppose $C$ is non-separating. One can construct a cycle $C'$ on $\Gamma$, that intersects $C$ exactly once. Let $C' =  \sum_{i \in [2g]} t_i' C_i$. Now $1 \equiv I_{C'}(C) \equiv \sum_{i \in [2g]} t_i' I(C,C_i) (\mod 2)$. This implies that there exists $i \in [2g]$ such that $I(C,C_i) \equiv 1(\mod 2)$.
\end{proof}
\begin{lemma}
\label{lem:alternate}
Let $C$ be a simple directed cycle on a genus $g$ orientable surface $\Gamma$ and let $\mathcal{C} = \{C_i\}_{i \in[2g]}$ be a system of $2g$ directed cycles on $\Gamma$, having exactly one point in common and $\Gamma \setminus \mathcal{C}$ is the fundamental polygon, say $\Gamma'$. If $I(C,C_i)$ is even for all $i \in [2g]$ then for all $j \in [2g]$, $C$ alternates between going from left to right and from right to left of the cycle $C_j$ in the direction of $C_j$ (if $C$ crosses $C_j$ at all).
\end{lemma}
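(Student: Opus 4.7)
The plan is to deduce the alternation from the fact that the hypothesis forces $C$ to be a separating cycle. Applying the contrapositive of Lemma~\ref{lem:cabello} to the assumption $I(C, C_i) \equiv 0 \pmod{2}$ for every $i$, $C$ must be a separating simple cycle on $\Gamma$. Consequently $\Gamma \setminus C$ decomposes into exactly two connected components $A$ and $B$; since $\Gamma$ is orientable and $C$ is oriented, I can globally and coherently label $A$ as the left side of $C$ and $B$ as the right side.

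Next, I fix $j \in [2g]$ and let $Q_1, \ldots, Q_{2k}$ be the (transverse) intersections of $C$ with $C_j$, listed in the cyclic order along $C_j$. Each arc of $C_j$ strictly between two consecutive crossings avoids $C$, and hence lies entirely in one of $A$ or $B$. Thus, as we traverse $C_j$ in its given direction, the component we are currently in must toggle at every $Q_i$: the transitions along $C_j$ alternate strictly between the types $A \to B$ and $B \to A$.

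The final step is to match this global $A \leftrightarrow B$ alternation with the local direction of each crossing. At each $Q_i$, pick an orientation-preserving chart in which $C_j$ points in the $+x$-direction. If $C$ crosses $C_j$ from left to right at $Q_i$, then the tangent of $C$ at $Q_i$ has negative $y$-component, so rotating it by $90^{\circ}$ counterclockwise (to obtain the left normal of $C$) produces a vector with positive $x$-component; thus $A$ locally occupies the $+x$ half-plane at $Q_i$, and so $C_j$ (moving in the $+x$-direction) transitions at $Q_i$ from $B$ to $A$. A symmetric computation shows that a right-to-left crossing of $C$ corresponds to an $A \to B$ transition. Combining this with the forced $A \leftrightarrow B$ alternation along $C_j$, the crossing directions must alternate as well, which is exactly the claim.

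The main obstacle is the last local-to-global step: one must use the orientability of $\Gamma$ to obtain a globally consistent labeling in which the left side of $C$ is always $A$, so that the association ``$C$ crosses $C_j$ left-to-right $\Leftrightarrow$ $C_j$ transitions $B \to A$'' is uniform across all crossings $Q_i$. Once this coherence is secured, the conclusion follows immediately from the alternation of components along $C_j$.
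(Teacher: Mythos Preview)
Your argument is correct. Both your proof and the paper's hinge on Lemma~\ref{lem:cabello}, but you run the implication in the opposite direction. The paper argues the contrapositive: it assumes two consecutive crossings along some $C_j$ go the same way, and then explicitly constructs, for arbitrary $Q_1,Q_2\in\Gamma\setminus C$, a path in $\Gamma\setminus C$ joining them (by sliding along $C$ and using the segment of $C_j$ between the two bad crossings to switch sides), concluding that $C$ is non-separating and invoking Lemma~\ref{lem:cabello} for the contradiction. You instead apply Lemma~\ref{lem:cabello} first to get that $C$ is separating, use orientability to globally and coherently label the two components as the left/right sides of $C$, and then read off the alternation of crossing types from the forced $A\leftrightarrow B$ alternation of the arcs of $C_j$ between consecutive crossings. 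Your route is a bit more conceptual and avoids the ad hoc path construction; the paper's route is more constructive and makes explicit exactly how non-alternation produces a connecting path. The one point worth stating more carefully in your write-up is the passage from ``the left normal of $C$ has positive $x$-component'' to ``$A$ occupies the $+x$ half-plane'': what you really need (and what your computation gives) is that the point $(\varepsilon,0)$ just ahead on $C_j$ lies in the left half-plane of $C$, which follows since that half-plane is $\{\,(x,y):\ell_x x+\ell_y y>0\,\}$ with $\ell_x>0$.
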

\begin{proof}
Suppose there exists a $j \in [2g]$ such that $C$ does not alternate being going from left to right and from right to left with respect to $C_j$. Thus if we consider the ordered set of points where $C$ intersects $C_j$, ordered in the direction of $C_j$, there are two consecutive points (say $P_1$ and $P_2$) such that at both these points $C$ crosses $C_j$ in the same direction. 

Let $Q_1$ and $Q_2$ be two points in $\Gamma \setminus C$. We will show that there exists a path in $\Gamma \setminus C$ between $Q_1$ and $Q_2$. Consider the shortest path from $Q_1$ to $C$. Let $Q_1'$ be the point on this path that is as close to $C$ as possible, without lying on $C$. Similarly define a point $Q_2'$ corresponding to $Q_2$. Note that it is sufficient for us to construct a path between $Q_1'$ and $Q_2'$ in $\Gamma \setminus C$. If both $Q_1'$ and $Q_2'$ locally lie on the same side of $C$, then we get a path from $Q_1'$ to $Q_2'$ not intersecting $C$, by traversing along the boundary of $C$. Now suppose $Q_1'$ and $Q_2'$ lie on opposite sides (w.l.o.g. assume that $Q_1'$ lies on the right side) of $C$. From $Q_1'$ start traversing the cycle until you reach cycle $C_j$ (point $P_1$ in Figure \ref{fig:intersection}). Continue along cycle $C_j$ towards the adjacent intersection point of $C$ and $C_j$, going as close to $C$ as possible, without intersecting it (point $P_2$ in Figure \ref{fig:intersection}). Essentially this corresponds to switching from one side of $C$ to the other side without intersecting it. Next traverse along $C$ to reach $Q_2'$. Thus we have a path from $Q_1'$ to $Q_2'$ in $\Gamma \setminus C$. We give an example of this traversal in Figure \ref{fig:intersection}. This implies that $C$ is non-separating.

It is well known that $\mathcal{C}$ forms a generating set of $H_1(\Gamma)$, the first homology group of the surface. Now from Lemma \ref{lem:cabello} it follows that $I(C,C_l) \equiv 1 (\mod 2)$ for some $l \in [2g]$, which is a contradiction.

\end{proof}

\begin{lemma}
\label{lem:weight}
Let $G$ be a graph in {\GG} with $C$ being a simple cycle in $G$ and $E^C_1$ being the set of edges of $C$ that intersects segment $S_1$. Assume $|E^C_1|$ is even and the edges in $E^C_1$ alternate between going out and coming into the grid. Let $i_1 < i_2 < \ldots < i_{2p-1} < i_{2p}$ be the distinct indices on $S_1$ where $C$ intersects it. Then
\[
\left|circ_{w'_1}(E^C_1)\right| = \left|\sum_{k=1}^p (i_{2k} - i_{2k-1})\right|
\]
and thus non-zero unless $E^C_1$ is empty.
\end{lemma}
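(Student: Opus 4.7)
The plan is to evaluate $circ_{w'_1}(E^C_1)$ as a sum over the $2p$ intersections of $C$ with $S_1$, and use two parity observations to collapse it to the alternating sum $\sum_k (-1)^k i_k$. At each intersection vertex $v_k$ at index $i_k$ on $S_1$, exactly one of the two edges of $C$ incident to $v_k$ lies on the $S_1$ side in the planar view (the other lies on the $S_1'$ side after the gluing) and thus belongs to $E^C_1$. Which edge, the predecessor $e^-_k$ or the successor $e^+_k$ of $v_k$ in the cyclic order on $C$, is determined by the direction in which $C$ crosses $S_1$ at $v_k$. By the alternation hypothesis (Lemma \ref{lem:alternate}), these crossing directions alternate as we move along $S_1$ in sorted order $i_1 < i_2 < \cdots < i_{2p}$. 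So without loss of generality, for odd $k$ the $E^C_1$ edge is $e^-_k$ (at position $\operatorname{pos}(v_k)-1$ in $C$), and for even $k$ it is $e^+_k$ (at position $\operatorname{pos}(v_k)$); in both cases, the parity of this position is $\operatorname{pos}(v_k) + k \pmod{2}$.

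The first key parity observation uses the bipartiteness of $G$ (Lemma \ref{lem:even}): the natural checkerboard 2-coloring of grid vertices by $(x+y) \bmod 2$ is a proper 2-coloring, so vertex colors alternate along $C$. Boundary vertices along $S_1$ have coordinates changing by one unit per step along the boundary (even at a corner of the grid), so their colors alternate with the index; hence $\operatorname{color}(v_k) \equiv i_k + c_0 \pmod{2}$ for a constant $c_0$ depending only on the location of $S_1$. Consequently, $\operatorname{pos}(v_k) \equiv i_k + c_1 \pmod{2}$ for a constant $c_1$ depending only on the chosen starting vertex of $C$. Combining with the previous observation, the position in $C$ of the $E^C_1$ edge at intersection $k$ has parity $i_k + k + c_1 \pmod{2}$.

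So the circulation sign contributed by intersection $k$ is $(-1)^{i_k+k+c_1}$, while the weight of the corresponding edge is $w'_1 = (-1)^{i_k+1} i_k$; their product $(-1)^{2i_k + k + c_1 + 1} i_k = (-1)^{k + c_1 + 1} i_k$ is independent of the parity of $i_k$, and so the $(-1)^{i_k}$ factors cancel. Summing over $k = 1, \ldots, 2p$,
\[
circ_{w'_1}(E^C_1) \;=\; (-1)^{c_1+1} \sum_{k=1}^{2p} (-1)^k i_k \;=\; (-1)^{c_1+1} \sum_{k=1}^{p} (i_{2k} - i_{2k-1}),
\]
which gives the claimed identity in absolute value, and the sum is non-zero whenever $E^C_1$ is non-empty since each term $i_{2k}-i_{2k-1}$ is strictly positive. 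The main delicate point is rigorously linking the geometric ``alternating crossing directions'' of Lemma \ref{lem:alternate} to the combinatorial alternation between $e^-_k$ and $e^+_k$ at each intersection; this requires a careful unpacking of the planar-view description of the gluing of $S_1$ and $S_1'$ to identify which side of $v_k$ the $E^C_1$ edge lies on, and to check that this side truly alternates along the sorted order on $S_1$.
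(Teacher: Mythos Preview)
Your proof is correct and follows essentially the same approach as the paper: both use the alternation hypothesis (Lemma~\ref{lem:alternate}) to track crossing directions and bipartiteness (Lemma~\ref{lem:even}) to relate positions along $C$ to indices on $S_1$, so that the signed weights collapse to $\pm\sum_k(-1)^k i_k$. Your checkerboard $2$-coloring argument is a clean repackaging of the paper's auxiliary-cycle construction, in which the paper closes up a sub-arc of $C$ with a sub-arc of $S_1$ and invokes Lemma~\ref{lem:even} to read off the parity of the $C$-arc from that of the $S_1$-arc.
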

\begin{proof}
Let $e_j = (u_j,v_j)$ for $j \in [2p]$ be the $2p$ edges of $G$ lying on the segment $S_1$. Assume without loss of generality that the vertices $v_j$'s lie on $S_1$. Assign an orientation to $C$ such that $e_1$ is directed from $u_1$ to $v_1$. Also assume that $i_1$ is even and the circulation gives a positive sign to the edge $e_1$. Therefore $circ_{w'_1}(\{e_1\}) = -i_1$.

Now consider any edge $e_j$ such that $j$ is even. By Lemma \ref{lem:alternate}, the edge enters the segment $S_1$. Suppose $i_j$ is odd. Then consider the following cycle $C'$ formed by tracing $C$ from $u_j$ to $u_1$, without the edges $e_1$ and $e_j$ and then moving along the segment $S_1$ back to $u_j$. Since $i_j$ is odd therefore the latter part of $C'$ has odd length. Note that $C'$ need not be a simple cycle. By Lemma \ref{lem:even}, $|C'|$ is even, therefore the part of $C'$ from $u_1$ to $u_j$ also has odd length. This implies that the circulation gives a positive sign to the edge $e_j$. Therefore, $circ_{w'_1}(\{e_j\}) = i_j$. Similarly, if $i_j$ is odd, then the part of $C'$ from $u_1$ to $u_j$ will have even length. Thus the circulation gives a negative sign to the edge $e_j$ and therefore $circ_{w'_1}(\{e_j\}) = -(-i_j) = i_j$. 

If $j$ is odd, the above argument can be applied to show that $circ_{w'_1}(\{e_j\}) = -i_j$. Therefore we have,
\[
circ_{w'_1}(E^C_1) =  \sum_{k=1}^p (i_{2k} - i_{2k-1}).
\]

Now removing the assumptions at the beginning of this proof would show that the LHS and RHS of the above equation is true modulo absolute value as required.
\end{proof}

It is interesting to note here that similar method does not show that bipartite matching in non-orientable constant genus graphs is in {\SPL}. The reason is that Lemma \ref{lem:alternate} crucially uses the fact that the surface is orientable. In fact, one can easily come with counterexample to the Lemma if the surface is non-orientable.

\section{Reducing the non-orientable case to the orientable case}

Let $G$ be a bipartite graph embedded on a genus $g$ non-orientable surface. 
As a result of Theorem \ref{theorem:normal} we can assume that we are given a combinatorial embedding (say $\Pi$) of $G$ on a (non-orientable) polygonal schema,
say $\Lambda(\Gamma),$
in the normal form with $2g'$ sides.  (Here $g'$ is a function of $g.$)

Let $Y=(X_1,X_2)$ be the cyclic ordering of the labels of the sides of $\Lambda(\Gamma)$,
where $X_2$ 
is the `orientable part' and $X_1$ is the `non-orientable part'.
More precisely, 
for the polygonal schema in the normal form, we have:
$X_1$ is either $(\sigma,\sigma)$ (thus corresponds to the projective plane) or
it is  
$(\sigma,\tau,\bar{\sigma},\tau)$ (thus corresponds to the Klein bottle). 
See Figure \ref{fig:nonorientableschema}.

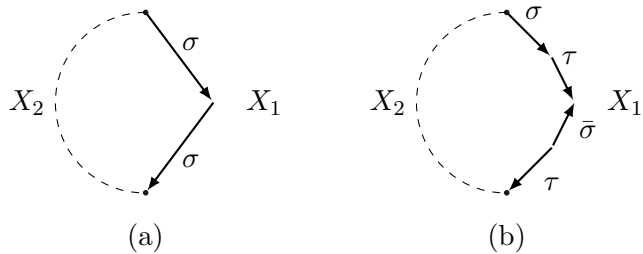
\begin{figure}[h]
\centering

\begin{tikzpicture}[scale=.6,shorten >=.35mm,>=latex]

\draw [thick] [->] (0,6) -- (1.5,4);
\draw [thick][->] (1.5,4) -- (0,2);

\draw [fill, color=black!90] (0,6) circle(0.5 mm);
\draw [fill, color=black!90] (0,2) circle(0.5 mm);

\node at (1,5.3) {$\sigma$};
\node at (1,2.7) {$\sigma$};

\draw [dashed] (0,6) arc (90:270:2);

\node [left] at (-2,4) {$X_2$};
\node [right] at (2,4) {$X_1$};
\node at (0,1) {(a)};

\draw [thick] [->] (8,6) -- (9,5);
\draw [thick][->] (9,5) -- (9.5,4);
\draw [thick][<-] (9.5,4) -- (9,3);
\draw [thick][->] (9,3) -- (8,2);

\draw [fill, color=black!90] (8,6) circle(0.5 mm);
\draw [fill, color=black!90] (8,2) circle(0.5 mm);

\node at (8.6,6) {$\sigma$};
\node at (9.4,5) {$\tau$};
\node at (9.8,3.4) {$\bar{\sigma}$};
\node at (9,2.2) {$\tau$};

\draw [dashed] (8,6) arc (90:270:2);

\node [left] at (6,4) {$X_2$};
\node [right] at (10,4) {$X_1$};
\node at (8,1) {(b)};

\end{tikzpicture}
\caption{{\bf (a)} $\Lambda(\Gamma)$ when the surface is a sum of an orientable surface and the projective plane. {\bf (b)} $\Lambda(\Gamma)$ when the surface is a sum of an orientable surface and the Klein bottle}.
\label{fig:nonorientableschema}
\end{figure}

Now let $G$ be a bipartite graph embedded on a non-orientable polygonal schema $\Lambda(\Gamma)$ with $2g'$
sides. 
We will construct a graph $G'$ embedded on an {\em orientable} polygonal schema with $4g' - 2$ sides such that 
$G$ has a perfect matching iff $G'$ has a perfect matching. 
Moreover, given a perfect matching in $G'$ one can retrieve in logspace a perfect matching in $G.$
This is illustrated in the following Theorem.

\begin{theorem}\label{theorem:nonorientable}
Let $G$ be a bipartite graph given with its embedding on a 
non-orientable polygonal schema in normal form $\Lambda(\Gamma)$, with $2g'$ sides as above. 
One can construct in logspace, another graph $G'$ together with its embedding on the polygonal schema of an orientable surface $\Gamma'$ of genus $4g' - 2$ such that:
$G$ has a perfect matching iff $G'$ has a perfect matching. Moreover, given a perfect matching in $G',$
one can construct in logspace a perfect matching in $G.$
\end{theorem}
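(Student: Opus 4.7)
The plan is to construct $G'$ as an orientation double cover of $G$, then fuse the two resulting polygonal copies into a single orientable polygonal schema. Concretely, I take two disjoint copies $G^{(0)}$ and $G^{(1)}$, each carrying its own copy of the embedding on $\Lambda(\Gamma)$, and set $V(G') = V(G) \times \{0,1\}$, with boundary vertices subsequently identified across the two sheets as dictated by the boundary rules below. The identification of the polygon sides is modified so that each orientable pair (the $\sigma/\bar{\sigma}$ pairs of a cluster $\sigma\tau\bar{\sigma}\bar{\tau}$ in $X_2$, or the $\sigma/\bar{\sigma}$ pair inside a Klein-bottle block $\sigma\tau\bar{\sigma}\tau$) is identified within each copy exactly as in $G$, while each non-orientable pair (the $\sigma\sigma$ of the projective-plane block, or the $\tau\tau$ of the Klein-bottle block) is cross-identified between the two copies in an orientation-preserving way. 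Because the cross-identification converts the only orientation-reversing gluings into orientation-preserving ones, the resulting surface $\Gamma'$ is orientable; fusing the two polygons along one of the cross-identified pairs (turning two boundary sides into interior edges) yields a single polygonal schema of $\Gamma'$ with $4g'-2$ sides, as claimed. $G'$ inherits bipartiteness from $G$ via the sheet-preserving coloring $v \mapsto v^i$, and the entire construction is clearly logspace-computable.

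For the forward direction, if $M$ is a perfect matching of $G$ then lifting each edge of $M$ to its two copies in $G'$ produces a perfect matching $M'$ of $G'$: each vertex $v^i$ is saturated by the unique lift of the $M$-edge incident to $v$ in the appropriate sheet. For the backward direction, given a perfect matching $M'$ of $G'$, project its edges to the underlying edges of $G$ to form the edge-multiset $\pi(M')$. Since both lifts $v^0, v^1$ of every $v \in V(G)$ are saturated in $M'$, every $v$ has multiplicity-degree exactly $2$ in $\pi(M')$, so $\pi(M')$ is a $2$-regular multigraph and therefore a disjoint union of closed walks. Since $G$ is bipartite, each closed walk has even length and admits a perfect matching of its vertex set formed by alternating edges; picking one such alternating matching in each component assembles a perfect matching $M$ of $G$.

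The extraction of $M$ from $M'$ is logspace: the cycles of the degree-two multigraph $\pi(M')$ can be traced by following incident edges from any starting vertex, and within each cycle one selects alternating edges anchored at the lexicographically minimum vertex. Combined with the obvious logspace construction of $G'$ and its embedding (local vertex duplication and rewriting of boundary identifications), this gives the whole reduction in logspace.

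The main obstacle is the correctness of the backward direction, particularly the careful treatment of ``double edges'' in $\pi(M')$ --- those arising when both $G'$-copies of the same $G$-edge lie in $M'$ --- which must be handled as length-two cycles, from which either of the two parallel edges may be selected. A secondary technical point is the Euler-characteristic bookkeeping needed to confirm that the fused polygonal schema has exactly $4g'-2$ sides and represents an orientable surface of the stated genus; this follows because, after the cross-identification, every surviving identification is orientation-preserving, and the two disks glued along one pair of sides produce a single disk with $4g'-2$ boundary sides whose surviving pairings are all orientation-preserving.
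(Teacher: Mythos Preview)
Your proposal is correct and essentially identical to the paper's proof: both take two copies of the polygonal schema, cross-identify the non-orientable side pairs between sheets (fusing along one such pair to get a single $(4g'-2)$-sided polygon), keep the orientable pairs within each sheet, and then use the same lift/projection argument for matchings together with bipartiteness to extract a perfect matching of $G$ from the projected $2$-regular multigraph. The only difference is presentational---the paper handles the projective-plane and Klein-bottle blocks separately with explicit figures, whereas you describe the construction uniformly as the orientation double cover.
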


\begin{proof}
We first show the case when $\Gamma$ is the sum of an orientable surface and a Klein bottle. Consider the polygonal schema formed by taking two copies of $\Lambda(\Gamma)$ and glueing the side $\tau$ of one copy with its partnered side $\tau$ of the other copy. We relabel the edge labelled $\sigma$ in the second copy with some unused symbol $\delta$ to avoid confusion. The entire reduction is shown in Figure \ref{fig:conversion_klein}. Let $G'$ be the resulting graph.

\begin{figure}[h]
\centering

\begin{tikzpicture}[scale=.6,shorten >=.35mm,>=latex]

\draw [] [->] (8,6) -- (9,5);
\draw [][->] (9,5) -- (9.5,4);
\draw [][<-] (9.5,4) -- (9,3);
\draw [very thick][->] (9,3) -- (8,2);

\draw [fill, color=black!90] (8,6) circle(0.5 mm);
\draw [fill, color=black!90] (8,2) circle(0.5 mm);

\node at (8.6,6) {$\sigma$};
\node at (9.4,5) {$\tau$};
\node at (9.8,3.4) {$\bar{\sigma}$};
\node at (9,2.2) {$\tau$};

\draw [dashed] (8,6) arc (90:270:2);

\node [left] at (6,4) {$X_{21}$};

\draw [][<-] (10,2) -- (11.5,2);
\draw [very thick][<-] (9,1) -- (10,2);
\draw [] [->] (8.5,0) -- (9,1);
\draw [] [<-] (8.5,-1) -- (8.5,0);
	
\draw [fill, color=black!90] (8.5,-1) circle(0.5 mm);
\draw [fill, color=black!90] (11.5,2) circle(0.5 mm);

\node [below] at (10.75,2) {$\delta$};
\node [below] at (9.75,1.7) {$\tau$};
\node [right] at (8.75,.4) {$\bar{\delta}$};
\node [right] at (8.75,-.5) {$\tau$};

\draw [dashed] (8.5,-1) arc (270:360:3);

\node [right] at (11,0) {$X_{22}$};

\node at (8,-2) {(a)};

\draw [] [->] (16,4) -- (17,3);
\draw [][->] (17,3) -- (17.5,2);
\draw [][<-] (17.5,2) -- (17,1);
\draw [][<-] (17,1) -- (16,0);
\draw [][<-] (14.5,1) -- (14,2);
\draw [][->] (14,2) -- (14.5,3);

\draw [fill, color=black!90] (16,4) circle(0.5 mm);
\draw [fill, color=black!90] (16,0) circle(0.5 mm);

\node at (16.6,4) {$\sigma$};
\node at (17.4,3) {$\tau$};
\node at (17.8,1.4) {$\bar{\sigma}$};
\node at (17,0.2) {$\delta$};
\node at (13.8,2.5) {$\bar{\delta}$};
\node at (13.8,1.4) {$\bar{\tau}$};

\draw [dashed] (16,4) arc (90:180:1.6);
\draw [dashed] (16,0) arc (270:180:1.6);

\node [left] at (15,0) {$X_{22}$};
\node [left] at (15,4) {$X_{21}$};

\node at (16,-2) {(b)};

\end{tikzpicture}
\caption{{\em Klein bottle}. {\bf (a)} The two copies of $\Lambda(\Gamma)$ with the side that is being glued shown in dark. {\bf (b)} Polygonal schema obtained after the glueing operation.}

\label{fig:conversion_klein}
\end{figure}
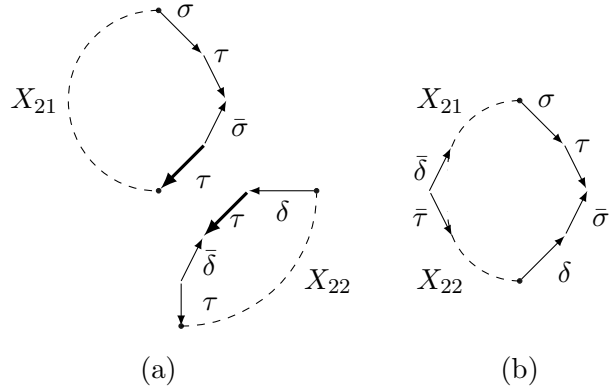

Note that the polygonal schema obtained as a result represents an orientable surface and has constantly many sides. Also every vertex and edge in $G$ has exactly two copies in $G'$ and $G'$ is also bipartite. Let $M$ be a matching in $G$. Let $M'$ be the union of the edges of $M$ from both the copies of $G$ . Its easy to see that $M'$ is a matching in $G'$. Now consider a matching $M'$ in $G'$. The {\em projection} of $M'$ to $G$ gives a subgraph of $G$ where every vertex
has degree (counted with multiplicity) exactly two. Since $G$ is bipartite, one can obtain a perfect matching within this subgraph.

Now consider the case when $\Gamma$ is the `sum' of an orientable surface and a projective plane,
i.e., following the notation above $X_1$ corresponds to the labels of a polygonal schema
for the projective plane and $X_2$ corresponds to the labels of a polygonal schema of an
orientable surface. Take two copies of $\Lambda(\Gamma)$, and glue $\sigma$ of one copy with its partner $\sigma$ in the other copy. We show this operation in Figure \ref{fig:conversion_projective}.
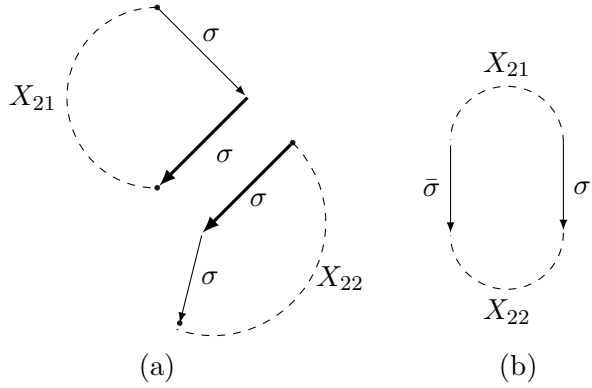
\begin{figure}[h]
\centering

\begin{tikzpicture}[scale=.6,shorten >=.35mm,>=latex]

\draw [] [->] (8,6) -- (10,4);
\draw [very thick][->] (10,4) -- (8,2);

\draw [fill, color=black!90] (8,6) circle(0.5 mm);
\draw [fill, color=black!90] (8,2) circle(0.5 mm);

\node at (9.2,5.4) {$\sigma$};
\node at (9.5,2.7) {$\sigma$};

\draw [dashed] (8,6) arc (90:270:2);

\node [left] at (6,4) {$X_{21}$};

\draw [very thick][<-] (9,1) -- (11,3);
\draw [] [<-] (8.5,-1) -- (9,1);
	
\draw [fill, color=black!90] (8.5,-1) circle(0.5 mm);
\draw [fill, color=black!90] (11,3) circle(0.5 mm);

\node [below] at (10.25,2.1) {$\sigma$};
\node [right] at (8.75,0) {$\sigma$};

\draw [dashed] (11,3) arc (45:-110:2.5);

\node [right] at (11.3,0) {$X_{22}$};

\node at (8,-2) {(a)};

\draw [][->] (17,3) -- (17,1);
\draw [][<-] (14.5,1) -- (14.5,3);

\node [right] at (17,2) {$\sigma$};
\node [left] at (14.5,2) {$\bar{\sigma}$};

\draw [dashed] (17,3) arc (0:180:1.25);
\draw [dashed] (17,1) arc (360:180:1.25);

\node [above] at (15.75,4.25) {$X_{21}$};
\node [below] at (15.75,-.25) {$X_{22}$};

\node at (16,-2) {(b)};

\end{tikzpicture}
\caption{{\em Projective plane}. {\bf (a)} The two copies of $\Lambda(\Gamma)$ with the two pair of sides that are being glued shown in dark. {\bf (b)} Polygonal schema obtained after the glueing operation.}

\label{fig:conversion_projective}
\end{figure}
The rest of the proof is similar to the Klein bottle case. 
\end{proof}

Thus we see that the non-orientable case can be reduced to the orientable case. The resulting polygonal schema need not be in the normal form. Once again we apply Theorem \ref{theorem:normal} to get a combinatorial embedding on a polygonal schema in the normal form.
\section*{Acknowledgment}

The third author would like to thank Prof. Mark Brittenham from the Mathematics department at the University of Nebraska-Lincoln, for numerous discussions that they had and for providing valuable insight into topics in algebraic topology.  

\bibliographystyle{alpha}
\bibliography{GGR}

\end{document}